\tikzstyle{dot}=[circle,draw,minimum size=0.8mm,inner sep=0pt,fill]
\declaretheorem[numberwithin=section, name=Lemma]{lemma}
\declaretheorem[name=Theorem,sibling=lemma]{theorem}
\declaretheorem[name=Corollary,sibling=lemma]{corollary}
\declaretheorem[name=Conjecture,sibling=lemma]{conjecture}
\declaretheorem[style=definition, name=Example,sibling=lemma]{example}
\declaretheorem[style=definition,name=Definition,sibling=lemma]{definition}
\newcommand{\N}{\mathbb{N}}
\newcommand{\define}[1]{\emph{#1}}
\newcommand{\notiff}{%
  \mathrel{{\ooalign{\hidewidth$\not\phantom{"}$\hidewidth\cr$\iff$}}}}
\newcommand{\bound}[1]{B_{\text{#1}}}
\newcommand{\Dyck}{\textbf{Dyck}}
\newcommand{\Prime}{\textbf{Prime}}
\newcommand{\UInfty}{\textbf{UInf}}
\newcommand{\Pali}{\textbf{Pali}}
\newcommand{\DetPali}{\textbf{DPali}}
\newcommand{\T}{\textbf{T}}
\newcommand{\B}{\textbf{T'}}
\newcommand{\PAD}{\operatorname{PAD}}
\newcommand{\ind}{\operatorname{ind}}
\newcommand{\enc}{\operatorname{enc}}
\newcommand{\im}{\operatorname{im}}
\newcommand{\core}{\operatorname{core}}
\newcommand{\head}{\operatorname{head}}
\newcommand{\bottom}{\operatorname{bottom}}
\newcommand{\front}{\operatorname{front}}
\newcommand{\tail}{\operatorname{tail}}
\newcommand{\Top}[1]{\mathcal{T}_\mathcal{#1}}
\newcommand{\Pos}{P}
\setlist{noitemsep}
\title{Exploring the Topological Entropy of Formal Languages}
\author{Florian Starke}
\keywords{Complexity of formal languages, Automata, Topological entropy}
\subjclass[2010]{68Q45}
\begin{document}

\begin{abstract}
We introduce the notions of \define{topological entropy} of a formal language and of a topological automaton. We show that the entropy function is surjective and bound the entropy of languages accepted by deterministic $\varepsilon$-free push-down automata with an arbitrary amount of stacks.
\end{abstract}

\maketitle

\section{Introduction}

A well established notion to measure the complexity of a dynamical system is \define{topological entropy}. It measures how chaotic or random a dynamical system is.
A topological automaton contains a dynamical system.
Using this we can define the complexity of a language to be the topological entropy of the minimal topological automaton accepting it.

Steinberg introduced the notion of a topolocial automaton in 2013 \cite{steinberg}. Then in 2016, Schneider and Borchmann used this notion to define the topological entropy of a formal language. They gave a characterization of the topological entropy of a formal language in terms of Myhill-Nerode congruence classes and determined the entropy of some example languages \cite{schneiderborchmann}.

In this article we solve previously open problems in this field, and expand the variety of example languages. In particular, we show that the entropy function is surjective. 
We show that every language accepted by a deterministic $\varepsilon$-free counter automaton with an arbitrary amount of counters has zero entropy, which generalizes the fact that all regular languages have zero entropy. 
Furthermore, we give a finite upper bound for the entropy of languages accepted by deterministic $\varepsilon$-free push-down automata with an arbitrary amount of stacks, which shows that every deterministic $\varepsilon$-free context-free language has finite entropy. 
We determine the entropy of the Dyck languages, the deterministic palindrome language, and of some other new example languages. Among them is also a deterministic context-free language with infinite entropy.

This article is structured as follows. First we introduce the notion of a topological automaton and its topological entropy. 
We will have a brief discussion on the effect different encodings have on the entropy of a language in Section \ref{sec:Encoding}.
This will motivate Section \ref{sec:Surjectivity}, where we will show that for every possible entropy there is also a language with that entropy.
In Section~\ref{sec:DecisionProblems} we will take a glimpse on the connection between topological entropy and complexity theory by calculating the entropy of \textbf{SAT} and looking at the effect padding has on the entropy.
In Section~\ref{cha:automata} we will bound the entropy of languages accepted by certain kinds of counter automata and push-down automata. There we will also explore the connection between topological entropy and the Chomsky hierarchy.
In the end we will give an outline for future work.

\section{Preliminaries}\label{sec:StateOfTheArt}

In this section we will give a brief introduction to the Myhill-Nerode congruence relation, topological automata, and their topological entropy. 
We will use the results from Schneider and Borchmann \cite{schneiderborchmann} to give a compact definition of the topological entropy of a formal language.

\subsection{Myhill-Nerode Congruence Relation}
The Myhill-Nerode congruence relation is a very basic concept of formal languages, but as it is essential to this article we will give a short recapitulation.
Let $L$ be a formal language over some alphabet $\Sigma$. The \define{Myhill-Nerode right-congruence relation of $L$}, denoted by $\Theta(L)$, is 
\[\Theta(L)=\{(u,v)\mid \forall w\in\Sigma^*.\ uw\in L\iff vw\in L\}.\]
For two words $u$ and $v$ we say that $w$ \define{witnesses} $(u,v)\notin\Theta(L)$ if $uw\in L\notiff vw\in L$. For a word $u$, we call all words $w$ with $uw\in L$ \define{positive witnesses of $u$}. Denote the set of all positive witnesses of $u$ by $\Pos_u$. 
Note that $[u]=[v]$ if and only if $\Pos_u=\Pos_v$.

If a language $L$ has only finitely many congruence classes, then the minimal finite automaton accepting it can be constructed using $\Theta(L)$: 
take the set $\Sigma^*/\Theta(L)$ as states, where the final states are the congruence classes contained in $L$, $[\varepsilon]$ as initial state, and  $([w],a)\mapsto[wa]$ as transition function.

\subsection{Topological Automata and Topological Entropy}

Every deterministic finite automaton $\mathcal{A}$ contains a dynamical system, where the monoid $\Sigma^*$ acts on the states of $\mathcal{A}$. In the following definition we generalize this idea to automata with infinitely many states.
\begin{definition}
A \define{topological automaton} is a 5 tuple $\mathcal{A}=(X,\Sigma,\delta,x_0,F)$ where
\begin{itemize}
\item $X$ is a compact Hausdorff space (the \define{states}),
\item $\Sigma$ is an alphabet (the \define{input alphabet}),
\item $\delta\colon X\times\Sigma^*\to X$ is a continuous action of $\Sigma^*$ on $X$ (the \define{transition function}),
\item $x_0$ is an element from $X$ (the \define{initial state}), and
\item $F$ is a clopen, i.e., closed and open, subset of $X$ (the \define{final states}).
\end{itemize}
The language accepted by $\mathcal{A}$ is
\[L(\mathcal{A})=\{w\in\Sigma^*\mid \delta(x_0,w)\in F\}.\]
\end{definition}

We can define the topological entropy of $\mathcal{A}$ as the entropy of the underlying dynamical system but we do not want to introduce all the necessary notions and therefore we will use an equivalent definition.

Let $L=L(\mathcal{A})$, and $E\subseteq \Sigma^*$. We define the following two equivalence relations
\begin{align*}
\Theta_E(L)&{}=\{(u,v)\mid \forall w\in E.\ uw\in L \iff vw\in L\}\text{ and}\\
\Lambda_E(\mathcal{A})&{}=\{(x,y)\mid \forall w\in E.\ \delta(x,w)\in F\iff \delta(y,w)\in F\}.
\end{align*}
The relation $\Theta_E(L)$ is an approximation of the Myhill Nerode congruence relation of $L$ in the sense that it allows only words from $E$ as witnesses. Hence if we choose $E=\Sigma^*$, then $\Theta_{\Sigma^*}(L)$ is the Myhill Nerode congruence relation of $L$.
The second relation $\Lambda_E(\mathcal{A})$ is the counterpart of $\Theta_E(L)$ for the states of $\mathcal{A}$. 

For $n\in\N$ we denote 
\[\Theta_{\Sigma^{(n)}}(L)\text{ by }\Theta_n(L)\text{ and }\Lambda_{\Sigma^{(n)}}(\mathcal{A})\text{ by }\Lambda_n(\mathcal{A}).\] 
We will write $\Pos_u^{(n)}$ (and $\Pos_u^{n}$) for the positive witnesses of $u$ with length at most $n$ (with length exactly $n$).
Schneider and Borchmann showed (Lemma 3.6 from \cite{schneiderborchmann}) that if $\mathcal A$ is a topological automaton, where the reachable states are dense in the set of all states. Then the \define{topological entropy} $\eta(\mathcal{A})$ of $\mathcal A$ satisfies:

\[\eta(\mathcal{A})=\limsup_{n\to\infty}\frac{\log_2(\ind \Lambda_n(\mathcal{A}))}{n}.\]

Note that since $\Sigma^{(n)}$ is finite the index of $\Lambda_n(L)$ is also finite.
Since from every topological automaton we can obtain a topological automaton accepting the same language, where the reachable states are dense in the set of all states, we will use the above identity as a definition.
Analogously, we define the \define{topological entropy} of a language:

\[\eta(L)=\limsup_{n\to\infty}\frac{\log_2(\ind \Theta_n(L))}{n}.\]

Furthermore, Schneider and Borchmann give us the following connection between the entropy of an automaton and the entropy of its language.
\begin{theorem}[Theorem 3.10 from \cite{schneiderborchmann}]
Let $L$  be a language. Then
\[\eta(L)=\min\{\eta(\mathcal{A})\mid \mathcal{A}\text{ is a topological automaton with }L(\mathcal{A})=L\}.\]
\end{theorem}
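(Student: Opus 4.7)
The plan is to prove both inequalities: $\eta(\mathcal{A})\geq\eta(L)$ for every topological automaton $\mathcal{A}$ accepting $L$, and $\eta(\mathcal{A}_L)\leq\eta(L)$ for a specific automaton $\mathcal{A}_L$, which simultaneously shows that the infimum is attained. For the lower bound, let $\mathcal{A}=(X,\Sigma,\delta,x_0,F)$ accept $L$; by the remark in the preliminaries I may assume the reachable set $R:=\{\delta(x_0,u):u\in\Sigma^*\}$ is dense in $X$. Since $\delta(\cdot,w)$ is continuous and $F$ is clopen, each $\delta(\cdot,w)^{-1}(F)$ is clopen, and consequently every class of $\Lambda_n(\mathcal{A})$---a finite intersection of such sets and their complements---is clopen. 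In particular each class is open and nonempty, so by density it meets $R$, giving $\ind\Lambda_n(\mathcal{A})=|R/\Lambda_n(\mathcal{A})|$. The map $u\mapsto\delta(x_0,u)$ has the property that whenever $w\in\Sigma^{(n)}$ witnesses $u\not\Theta_n v$, the identity $\delta(\delta(x_0,u),w)=\delta(x_0,uw)$ places exactly one of $\delta(x_0,u),\delta(x_0,v)$ in the $F$-fiber after applying $\delta(\cdot,w)$. Hence the induced map $R/\Lambda_n(\mathcal{A})\to\Sigma^*/\Theta_n(L)$ is a well-defined surjection, so $\ind\Lambda_n(\mathcal{A})\geq\ind\Theta_n(L)$, and $\eta(\mathcal{A})\geq\eta(L)$ after dividing by $n$ and taking $\limsup$.

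For the reverse direction I would build $\mathcal{A}_L$ as a profinite limit of its finite quotients. Set $X_n:=\Sigma^*/\Theta_n(L)$ with the discrete topology; since $\Theta_{n+1}(L)$ refines $\Theta_n(L)$ there are canonical surjections $X_{n+1}\twoheadrightarrow X_n$, and the state space $X:=\varprojlim_n X_n\subseteq\prod_n X_n$ with the product topology is compact Hausdorff. For each letter $a\in\Sigma$, the assignment $[u]_{\Theta_{n+1}}\mapsto[ua]_{\Theta_n}$ is a well-defined map $\phi_{n,a}\colon X_{n+1}\to X_n$, because any length-at-most-$n$ witness distinguishing $ua$ from $u'a$ would be a length-at-most-$(n+1)$ witness distinguishing $u$ from $u'$; these coherent finite maps assemble into a continuous $\delta(\cdot,a)\colon X\to X$, which extends to a continuous action of $\Sigma^*$. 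Take $x_0:=([\varepsilon]_{\Theta_n})_n$ and let $F$ be the preimage under the projection $X\to X_0$ of the $\Theta_0$-classes contained in $L$; this is clopen. A direct check that the $n$-th coordinate of $\delta(x_0,w)$ equals $[w]_{\Theta_n}$ then shows $L(\mathcal{A}_L)=L$.

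The final step is to show $\ind\Lambda_n(\mathcal{A}_L)=\ind\Theta_n(L)$. The reachable states $\{([w]_{\Theta_k})_k:w\in\Sigma^*\}$ project surjectively onto every $X_n$, so they are dense in $X$, and the lower-bound argument applies and gives one direction. For the converse, if $u\Theta_n v$ then for each $w\in\Sigma^{(n)}$ the $0$-th coordinate of $\delta(\delta(x_0,u),w)=\delta(x_0,uw)$ is $[uw]_{\Theta_0}$, which lies in the $L$-class iff $uw\in L$ iff $vw\in L$, so $\delta(x_0,u)\Lambda_n\delta(x_0,v)$. Combining the two indices yields $\eta(\mathcal{A}_L)=\eta(L)$. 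I expect the main obstacle to be verifying rigorously that the profinite construction really yields a topological automaton in the paper's sense---compact Hausdorff state space, continuous action on all of $X\times\Sigma^*$, and clopen $F$---rather than the combinatorial identification of $\Lambda_n(\mathcal{A}_L)$ with $\Theta_n(L)$, which is essentially bookkeeping once the topological framework is in place.
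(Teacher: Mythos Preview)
The paper does not actually prove this statement: it is quoted verbatim as ``Theorem 3.10 from \cite{schneiderborchmann}'' and used as a black box, so there is no proof in the present paper to compare your attempt against.

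That said, your argument is sound and is essentially the standard route (and almost certainly the one Schneider and Borchmann take). The lower bound is correct: for any accepting $\mathcal{A}$ the assignment $[\delta(x_0,u)]_{\Lambda_n}\mapsto[u]_{\Theta_n}$ is a well-defined surjection, giving $\ind\Lambda_n(\mathcal{A})\geq\ind\Theta_n(L)$ without even needing density (each $\Lambda_n$-class trivially contributes at least as much as its trace on $R$). For the upper bound, your profinite automaton $\mathcal{A}_L=\varprojlim_n \Sigma^*/\Theta_n(L)$ is exactly the minimal topological automaton of $L$; the checks you outline (compactness as an inverse limit of finite discrete spaces, continuity of $\delta$ because each output coordinate depends on a single input coordinate, clopenness of $F$ as a cylinder over $X_0$, density of reachable points because they surject onto every finite level) all go through. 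The identification $\ind\Lambda_n(\mathcal{A}_L)=\ind\Theta_n(L)$ then follows from your two-sided argument. Your anticipated ``main obstacle'' is not really an obstacle: inverse limits of finite discrete spaces are textbook profinite spaces, and the remaining verifications are routine.
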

In particular, for every language $L$ there also exists a topological automaton $\mathcal{A}$ such that $\eta(L)=\eta(\mathcal{A})$.

\section{Encodings and Surjectivity}
In this section we will first discuss how encodings effect the entropy of a language. In Section \ref{sec:Surjectivity} we will solve the previously open problem of whether the entropy function is surjective by constructing a language for every possible entropy.
Finally, we will discuss the entropy of languages over unary alphabets.

\subsection{Encoding}\label{sec:Encoding}

We motivate this subsection with an example.
\begin{example}\label{exa:Dyck}
An example considered in \cite{schneiderborchmann} is the \define{Dyck language with $k$ sorts of parenthesis}, which consists of all balanced strings over $\{(_1,)_1,\dots,(_k,)_k\}$.
More generally, let $\Gamma$ be an alphabet and $\overline{\Gamma}=\{\overline{a}\mid a\in\Gamma\}$. Then $\overline{\phantom{a}}\colon\Gamma\to\overline{\Gamma}$ is a bijection. 
Now the \define{Dyck language over $\Gamma$}, denoted by $\Dyck_\Gamma$, is the set of all words $u$ such that successively replacing $a\overline{a}$ in $u$ by $\varepsilon$ results in $\varepsilon$.
\end{example}

Later in Lemma \ref{lem:Dyck} we will show that $\eta(\Dyck_\Gamma)=\log_2|\Gamma|$. This shows that there are Dyck languages with arbitrarily high entropy. But what happens with the entropy if we encode all Dyck languages over a two element alphabet? 
In this section we will give upper and lower bounds for the entropy of an encoded language. Then we will apply these results to the Dyck languages, to show that their entropy is bounded if we encode them over a fixed alphabet.

\begin{definition}
An encoding of $\Sigma$ over $\Gamma$ is a mapping $\enc\colon\Sigma\to\Gamma^+$ with the prefix property, i.e., there is no word in the image of $\enc$ which is a prefix of another word in the image. 
\end{definition}

The prefix property is necessary to ensure the invertability of the encoding.

\begin{lemma}
Let $L$ be a language over $\Sigma$ and $\enc\colon\Sigma\to\Gamma^+$ an encoding of $\Sigma$ over $\Gamma$. Then $\frac{\eta(L)}{k_1}\leq \eta(\enc(L))\leq \frac{\eta(L)}{k_2}$, where $k_1=\max\{|u|\mid u\in\im(\enc)\}$ and $k_2=\min\{|u|\mid u\in\im(\enc)\}$.
\end{lemma}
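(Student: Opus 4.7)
My plan is to prove the two inequalities separately by bounding $\ind \Theta_n$ of one language in terms of $\ind \Theta_m$ of the other for a suitable $m = m(n)$, then dividing by $n$ and passing to the $\limsup$ in the definition of $\eta$.

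For the lower bound $\eta(L)/k_1 \le \eta(\enc(L))$, I would extend $\enc$ to the monoid homomorphism $\Sigma^* \to \Gamma^*$, which is injective by the prefix property. If a witness $w \in \Sigma^{(n)}$ separates $u_1, u_2 \in \Sigma^*$ with respect to $L$, then $\enc(w)$ has length at most $n k_1$ and separates $\enc(u_1)$ from $\enc(u_2)$ with respect to $\enc(L)$. Hence the map $[u] \mapsto [\enc(u)]$ is a well-defined injection from $\Sigma^*/\Theta_n(L)$ into $\Gamma^*/\Theta_{n k_1}(\enc(L))$, so $\ind \Theta_n(L) \le \ind \Theta_{n k_1}(\enc(L))$. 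Dividing by $n$ and taking $\limsup$ yields $\eta(L) \le k_1\, \eta(\enc(L))$.

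For the upper bound $\eta(\enc(L)) \le \eta(L)/k_2$, I would use the prefix property to parse any $x \in \Gamma^*$ greedily from the left: either $x = \enc(u)\cdot p$ uniquely, with $u \in \Sigma^*$ and $p$ a (possibly empty) proper prefix of some code word, or $x$ contains a suffix that cannot be completed to any code word. In this latter \emph{dead} case, no extension of $x$ can lie in $\enc(L)$, so all dead $x$ form a single $\Theta_n(\enc(L))$-class. For a non-dead $x$ with data $(u,p)$ and a witness $w \in \Gamma^{(n)}$, we have $xw \in \enc(L)$ iff $pw = \enc(v)$ for some $v \in \Sigma^*$ with $uv \in L$; since $|\enc(v)| \le |p| + n \le k_1 - 1 + n$, any such $v$ satisfies $|v| \le N := \lceil (n + k_1 - 1)/k_2 \rceil$. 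Therefore two non-dead words sharing the same remainder $p$ with decoded parts equivalent under $\Theta_N(L)$ are themselves equivalent under $\Theta_n(\enc(L))$. Since the number of admissible remainders is a constant $C$ depending only on the code, $\ind \Theta_n(\enc(L)) \le C\cdot \ind \Theta_N(L) + 1$, and dividing by $n$ together with $N/n \to 1/k_2$ gives $\eta(\enc(L)) \le \eta(L)/k_2$.

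The main obstacle is the upper bound: correctly accounting for the partial codeword $p$ and for inputs that cannot be parsed at all. Both issues only add a constant multiplicative factor $C$ and an additive constant to $N$, which are absorbed in the asymptotic, but the greedy decomposition itself relies essentially on the prefix property; without it, the same approach would fail.
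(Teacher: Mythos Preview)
Your approach is essentially the paper's: bound $\ind\Theta$ of one language at scale $n$ against the other at a linearly related scale, via the unique decomposition $x=\enc(u)\cdot p$ and a single ``dead'' class, then pass to the $\limsup$. One caveat on the lower bound: the map $[u]\mapsto[\enc(u)]$ from $\Sigma^*/\Theta_n(L)$ to $\Gamma^*/\Theta_{nk_1}(\enc(L))$ is in general \emph{not} well-defined, since a $\Gamma$-witness of length $\le nk_1$ can decode to a $\Sigma$-word of length up to $nk_1/k_2>n$; what your separation argument actually establishes is that representatives of distinct $\Theta_n(L)$-classes encode to distinct $\Theta_{nk_1}(\enc(L))$-classes, which already gives the inequality. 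The paper makes exactly this point by using the set-valued assignment $[u]\mapsto\{[\enc(u')]\mid u'\in[u]\}$ and showing the images are pairwise disjoint. For the upper bound your bookkeeping with $N=\lceil(n+k_1-1)/k_2\rceil$ differs cosmetically from the paper's (which bounds $\ind\Theta_{nk_2}(\enc(L))$ by $|\textit{Pre}|\cdot\ind\Theta_n(L)+1$), and your explicit absorption of $|p|\le k_1-1$ into $N$ is precisely what makes the implication ``same $p$ and $[u_1]_N=[u_2]_N \Rightarrow [x_1]_n=[x_2]_n$'' go through.
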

\begin{proof}
Note that $\varepsilon$ is not in the image of $\enc$, and as a consequence $k_2$ is at least 1.
We show the following two inequalities:
\begin{align*}
\ind\Theta_{n\cdot k_1}(\enc(L))&\geq \ind\Theta_{n}(L) \tag{$*$}\\
\ind\Theta_{n\cdot k_2}(\enc(L))&\leq |\textit{Pre}|\cdot\ind\Theta_{n}(L)+1 \tag{$**$}
\end{align*}
where $\textit{Pre}$ contains all real prefixes of words in the image of $\enc$.

For the first inequality consider the map
\[[u]\mapsto\{[\enc(u')]\mid u'\in[u]\}.\]
Note that injectivity of this map does not suffice to show $(*)$. We need to show that the images of two different classes $[u_1]$ and $[u_2]$ are disjoint. 
Let $[\enc(u_1')]$ and $[\enc(u_2')]$ be elements from sets corresponding to $[u_1]$ and $[u_2]$, respectively.
Because $u_1'\in[u_1]$ and $u_2'\in[u_2]$ there is a $w$ that witnesses $(u'_1,u'_2)\notin\Theta_{n}(L)$ and $|\enc(w)|\leq k_1\cdot|w|\leq k_1\cdot n$. Hence $\enc (w)$ is a witness for $(\enc (u'_1),\enc (u'_2))\notin\Theta_{n\cdot k_1}(\enc (L))$. 

For $(**)$ we will show that the following map is almost surjective
\begin{align*}
\Sigma^*/{\Theta_n(L)}\times \textit{Pre}&\to \Gamma^*/{\Theta_{n\cdot k_2}(\enc(L))}\\
([u],v)&\mapsto [\enc(u)\cdot v].
\end{align*}
Note that all words $u\in\Gamma^*$ with $\Pos_u=\emptyset$ lie in the same class. Let $[u]$ be a class of $\Theta_{n\cdot k_2}(\enc(L))$ such that $u$ has at least one positive witness $w$. Then $uw$ is the encoding of some word in $\Sigma^*$. Therefore there are $u'\in\Sigma^*$ and $v\in \textit{Pre}$ such that $\enc(u')v=u$. As a consequence, $([u'],v)$ lies in the preimage of $[u]$. From this $(**)$ follows.

Using these inequalities we can now infer
\begin{align*}
k_1\cdot \eta(\enc(L))
=k_1\cdot\limsup_{n\to\infty}\frac{\log_2\operatorname{ind}\Theta_{n\cdot k_1}(\enc(L))}{n\cdot k_1}
\stackrel{(*)}\geq \limsup_{n\to\infty}\frac{\log_2(\ind\Theta_{n}(L))}{n}
= \eta(L).
\end{align*}

The other direction follows similarly 
\begin{align*}
k_2\cdot \eta(\enc(L))
\stackrel{(**)}\leq \limsup_{n\to\infty}\frac{\log_2( |\Gamma^{(k_2-1)}|\cdot\ind\Theta_{n}(L)+1)}{n}= \eta(L).
\end{align*}

This concludes the proof.
\end{proof}

Note that if a language has infinite entropy, then every encoding of this language also has infinite entropy. On the other hand if $\eta(L)$ is zero, then the entropy of $\enc(L)$ will also be zero.
If all encoded letters have the same length we obtain the following corollary.

\begin{corollary}\label{cor:encSameLength}
Let $L$ be a language over $\Sigma$, $k\geq 1$, and $\enc\colon\Sigma\to\Gamma^k$ an encoding of  $\Sigma$ over $\Gamma$. Then $\eta(\enc(L))=\frac{\eta(L)}{k}\,$.
\end{corollary}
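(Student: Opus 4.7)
The plan is to obtain this corollary as an immediate specialization of the preceding lemma. The hypothesis $\enc\colon\Sigma\to\Gamma^{k}$ forces every codeword to have length exactly $k$, so the two constants appearing in the lemma,
\[
k_1=\max\{|u|\mid u\in\im(\enc)\}\quad\text{and}\quad k_2=\min\{|u|\mid u\in\im(\enc)\},
\]
both equal $k$. Substituting into the inequality $\eta(L)/k_1\le\eta(\enc(L))\le\eta(L)/k_2$ from the lemma then collapses it to $\eta(L)/k\le\eta(\enc(L))\le\eta(L)/k$, giving the desired equality $\eta(\enc(L))=\eta(L)/k$.

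The only thing to verify before invoking the lemma is that $\enc$ really is an encoding in the sense of the previous definition, i.e., that it has the prefix property. This is automatic here: since all codewords share the common length $k$, none can be a proper prefix of another. There is accordingly no real obstacle; the corollary is just the fixed-length case in which the outer bounds of the lemma coincide.
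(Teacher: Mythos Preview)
Your argument is correct and matches the paper's approach exactly: the corollary is stated without proof immediately after the lemma, precisely because the case $k_1=k_2=k$ collapses the two-sided bound to an equality. Your observation that the prefix property is automatic for fixed-length codewords is a nice explicit check that the paper leaves implicit.
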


We can encode every language over $\Sigma$ over a two element alphabet $\Gamma$ with an encoding $\enc\colon\Sigma\to\Gamma^k$ for some $k\in\N$. Note that the minimal possible value of $k$ for which we can define such an encoding is $\lceil\log_2|\Sigma|\rceil$. Hence we call the encoding $\enc$ \define{efficient} if $k=\lceil\log_2|\Sigma|\rceil$.
\begin{corollary}\label{cor:encBinary}
If $L$ is a language over $\Sigma$ and $\enc\colon\Sigma\to\Gamma^k$ efficiently encodes $\Sigma$ over a two letter alphabet $\Gamma$, then $\eta(\enc(L))=\frac{\eta(L)}{\lceil\log_2|\Sigma|\rceil}$.
\end{corollary}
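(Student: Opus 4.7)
The statement is an immediate specialization of Corollary \ref{cor:encSameLength}, so the plan is essentially just to apply that corollary with the right choice of $k$. First, I would unpack the definition of ``efficient'': by hypothesis we have $\enc\colon\Sigma\to\Gamma^k$ with $k=\lceil\log_2|\Sigma|\rceil$, so every codeword has the same length $k$. Since Corollary \ref{cor:encSameLength} applies to any encoding whose codewords all have a common length $k$, its conclusion $\eta(\enc(L))=\eta(L)/k$ specializes to $\eta(\enc(L))=\eta(L)/\lceil\log_2|\Sigma|\rceil$, which is exactly what we want.

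Before invoking the previous corollary I would insert a short sanity remark explaining why the hypothesis is sensible, namely that an efficient encoding over a two letter alphabet always exists: with $|\Gamma|=2$ and $k=\lceil\log_2|\Sigma|\rceil$ one has $|\Gamma^k|=2^k\geq|\Sigma|$, so we can inject $\Sigma$ into $\Gamma^k$, and any such injection automatically has the prefix property because all codewords have the same length. This also explains why $\lceil\log_2|\Sigma|\rceil$ is genuinely the minimal admissible $k$, justifying the word ``efficient''.

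There is no real obstacle here; all the work was done in the lemma preceding Corollary \ref{cor:encSameLength}. The only thing to double-check is the boundary case $|\Sigma|=1$, where $\lceil\log_2|\Sigma|\rceil=0$ would make the denominator vanish; but any encoding satisfies $k\geq 1$ (since $\varepsilon\notin\im(\enc)$), so an efficient encoding cannot exist for a one-letter $\Sigma$ over a two-letter $\Gamma$, and the statement is vacuous in that case. For $|\Sigma|\geq 2$ everything is well defined and the corollary follows in one line from Corollary \ref{cor:encSameLength}.
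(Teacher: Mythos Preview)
Your proposal is correct and matches the paper's approach: the paper gives no separate proof for this corollary, treating it as an immediate specialization of Corollary~\ref{cor:encSameLength} with $k=\lceil\log_2|\Sigma|\rceil$, which is exactly what you do. Your additional remarks on existence of efficient encodings and the degenerate case $|\Sigma|=1$ are sound but go beyond what the paper bothers to spell out.
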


Now we can answer our initial question: What happens if we encode the Dyck languages over a two element alphabet?
\begin{corollary}
Let $\enc\colon\Sigma\to\Delta^k$ be an efficient encoding of $\Sigma=\Gamma\cup\overline{\Gamma}$ over a two letter alphabet $\Delta$. Then
\[\eta(\enc(\Dyck_\Gamma))=\frac{\log_2|\Gamma|}{\lceil\log_2|\Gamma|\rceil+1}.\]
\end{corollary}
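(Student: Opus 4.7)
The statement is essentially a one-line corollary that combines two earlier results: the fact (to be proved in Lemma \ref{lem:Dyck}) that $\eta(\Dyck_\Gamma) = \log_2|\Gamma|$, and Corollary \ref{cor:encBinary}, which expresses the entropy of an efficient binary encoding in terms of the entropy of the original language.

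The plan is as follows. First, identify the source alphabet for $\Dyck_\Gamma$ as $\Sigma = \Gamma \cup \overline{\Gamma}$, so that $|\Sigma| = 2|\Gamma|$. Then the efficient encoding length is
\[
k = \lceil \log_2|\Sigma| \rceil = \lceil \log_2(2|\Gamma|) \rceil = \lceil \log_2|\Gamma| \rceil + 1,
\]
where pulling the $+1$ outside the ceiling is valid because $\log_2 2 = 1$ is an integer. Next, apply Corollary \ref{cor:encBinary} to $L = \Dyck_\Gamma$ to get
\[
\eta(\enc(\Dyck_\Gamma)) = \frac{\eta(\Dyck_\Gamma)}{\lceil \log_2|\Sigma|\rceil} = \frac{\eta(\Dyck_\Gamma)}{\lceil \log_2|\Gamma|\rceil + 1}.
\]
Finally, substitute $\eta(\Dyck_\Gamma) = \log_2|\Gamma|$ from Lemma \ref{lem:Dyck} to obtain the claimed formula.

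Since all the substantive work has been done in the preceding lemma on encodings and the Dyck lemma, there is no real obstacle here — the only subtlety is the arithmetic identity $\lceil \log_2(2|\Gamma|)\rceil = \lceil \log_2|\Gamma|\rceil + 1$, which follows immediately from the fact that the ceiling function commutes with addition by an integer. The proof can therefore be written in two or three lines.
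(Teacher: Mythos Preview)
Your proposal is correct and matches the paper's intended argument exactly: the paper leaves this corollary unproved, having set it up immediately after Corollary~\ref{cor:encBinary} and having already announced that $\eta(\Dyck_\Gamma)=\log_2|\Gamma|$ (Lemma~\ref{lem:Dyck}). The only nontrivial step is indeed the arithmetic identity $\lceil\log_2(2|\Gamma|)\rceil=\lceil\log_2|\Gamma|\rceil+1$, which you justify correctly.
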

For example, if we encode $\Dyck_{\{(_1,(_2\}}$ over $\{0,1\}$ with
\begin{align*}
(_1&\mapsto 00 & )_1&\mapsto 10\\
(_2&\mapsto 01 & )_2&\mapsto 11
\end{align*}
then the encoded language has entropy $\frac{1}{2}$.

This corollary implies that any encoded Dyck language has entropy less than one.
As this argument also affects the other example presented in \cite{schneiderborchmann}, namely the palindrome languages, we now lack examples for languages over a two element alphabet with an entropy in $(2,\infty)$. 
We will remedy this in the next section by constructing a language for any given entropy.

\subsection{Every Entropy has its Language}\label{sec:Surjectivity} 

In this subsection we will show that the entropy function $\eta\colon\mathcal{P}(\Sigma^*)\to [0,\infty]$ is surjective if $\Sigma$ contains at least two elements.
For now we fix the alphabet $\Sigma=\{0,1\}$. 

Let us call a sequence $(k_n)_{n\in\N}$ of natural numbers \define{suitable} if it is monotone increasing, $k_0=1$, and $k_n\leq 2\cdot k_{n-1}$. Note that in this case $k_n\leq 2^n$. Our goal is to construct a language $L$ such that $\ind\Theta_n(L)$ is about $2^{k_n}$. The construction generalizes an idea from Schneider and Borchmann (Example 4.12 in \cite{schneiderborchmann}).
We define for all $n\in\N$
\begin{align*}
\varphi_n\colon \Sigma^{2^n}&\to \mathcal{P}(\{1,\dots,k_n\})\\
a_1\dots a_{2^n}&\mapsto\{i\in\{1,\dots,k_n\}\mid a_i=1\}.
\end{align*}
Note that $|\im\varphi_n|=2^{k_n}$.
For $n\in\N$ we define a function $f_n\colon\Sigma^n\to\{1,\dots,k_n\}$ recursively.
Let us fix $f_0(\varepsilon)=1$. For $n\in\N$ define
\begin{align*}
f_{n+1}(0u)=f_{n}(u) && f_{n+1}(1u)=
	\begin{cases}
	f_{n}(u) & \text{if } f_{n}(u) + k_{n}>k_{n+1}\\
	f_{n}(u) + k_{n} & \text{if } f_{n}(u) + k_{n}\leq k_{n+1}.
	\end{cases}
\end{align*}

We construct a language $L$ in the following way:
\[L=\left\{uv~\middle|~ |v|=2^{|u|},  f_{|u|}(v)\in\varphi_{|u|}(u)\right\}.\]


The idea of $f_n$ is that if $k_{n+1}<2\cdot k_n$, then we do not need all possible new words to distinguish all elements in $\im\varphi_{n+1}$. This is because the number of words doubles since every word $u$ is split into $0u$ and $1u$. As a consequence, depending on the way we look at it, $f_n$ fuses some of these words together, or only splits as much words as needed. 

Now it is time to take a closer look at the properties of $\varphi_n$ and $f_n$. Clearly, $\varphi_n$ and $f_n$ are surjective. More interesting are the following two properties:

\begin{enumerate}[label=(P\arabic*)]
\item\label{enu:P1} the following are equivalent for all $n\in\N$ and all $u,v\in\Sigma^n$
\begin{enumerate}
\item $f_n(u)=f_n(v)$
\item $f_{n+1}(au)=f_{n+1}(av)$ for all $a\in\Sigma$
\item $f_{n+1}(au)=f_{n+1}(av)$ for some $a\in\Sigma$
\end{enumerate}
\item\label{enu:P2} the following are equivalent for all $n,k\in\N$, $u_1,u_2\in\Sigma^k$ with $k\leq 2^n$:
\begin{enumerate}
\item $\varphi_n(u_1v)=\varphi_n(u_2v)$ for some $v\in\Sigma^{2^n-k}$
\item  $\varphi_n(u_1v)=\varphi_n(u_2v)$ for all $v\in\Sigma^{2^n-k}$.
\end{enumerate}
\end{enumerate}

Let us check that these properties hold. For \ref{enu:P1} consider first $\text{(a)}\Rightarrow\text{(b)}$. We either have 
\begin{align*}
f_{n+1}(au)&=f_n(u)\phantom{{}+k_n}=f_n(v)\phantom{{}+k_n}=f_{n+1}(av)\text{, or}\\
f_{n+1}(au)&=f_n(u)+k_n=f_n(v)+k_n=f_{n+1}(av)
\end{align*}
for all $a\in\Sigma$.
The implication from (b) to (c) is trivial. For the last implication consider the case $a=0$, then $f_{n+1}(0u)=f_{n+1}(0v)$ implies $f_{n}(u)=f_{n}(v)$ by definition. If $f_{n+1}(1u)=f_{n+1}(1v)$, then either $f_{n+1}(1u)+k_n>k_{n+1}$ and $f_{n+1}(1u)=f_{n}(u)=f_{n}(v)=f_{n+1}(1v)$ or $f_{n+1}(1u)+k_n\leq k_{n+1}$, which again implies $f_{n}(u)=f_{n}(v)$.

The property \ref{enu:P2} is clear from the definition of $\varphi_n$.
If we assume an additional property on $k_n$, then we are able to bound $\ind{\Theta_n(L)}$ for large $n$.
\begin{lemma}\label{lem:surjectiveBounds}
If $k_n$ grows sufficiently fast, i.e., there is an $N\in\N$ such that $n\cdot k_n\leq k_{2^n}$ for all $n\geq N$,
then 
\[2^{k_n}\leq \ind\Theta_n(L)\leq 3\cdot(n+1)^2\cdot 2^{k_n}\text{ for all $n\geq 2^N$.}\]
\end{lemma}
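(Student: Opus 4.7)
My plan is to exploit the layered structure of $L$—each word of $L$ has length $2^m + m$ for some $m$—and bound $\ind \Theta_n(L)$ from both sides by counting distinct positive-witness sets $\Pos_w^{(n)}$. For the lower bound, consider prefixes $w$ of length exactly $2^n$. Any positive witness $x$ of $w$ with $|x| \le n$ must satisfy $|wx|=2^{m}+m$ for some $m$, and the only $m$ for which $2^{m}+m$ lies in $[2^n,2^n+n]$ is $m=n$ (with $|x|=n$); hence
\[\Pos_w^{(n)} = \{v \in \Sigma^n : f_n(v) \in \varphi_n(w)\}.\]
Since $\varphi_n$ surjects onto $\mathcal{P}(\{1,\dots,k_n\})$ and $f_n$ surjects onto $\{1,\dots,k_n\}$, any two $w_1,w_2 \in \Sigma^{2^n}$ with $\varphi_n(w_1) \neq \varphi_n(w_2)$ are separated by a witness $v \in f_n^{-1}(i)$ for any $i$ in the symmetric difference. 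This yields $\ind \Theta_n(L) \ge 2^{k_n}$.

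For the upper bound, I would classify each class $[w]$ by the tuple $(\Pos_w^{0}, \dots, \Pos_w^{n})$; its $i$-th coordinate is non-empty only when $|w|+i=2^m+m$ for some (unique) $m$. I claim that for each fixed $i$ the number of distinct non-empty $\Pos_w^{i}$ is at most $2^{k_i}$. In the case $|w| > 2^m$ write $w = uv_0$ with $|u|=2^m$; then
\[\Pos_w^{i} = \{x \in \Sigma^{i} : f_m(v_0 x) \in \varphi_m(u)\}.\]
Unfolding the recursion for $f_m$ produces a function $G_{v_0}\colon \{1,\dots,k_i\} \to \{1,\dots,k_m\}$ with $f_m(v_0 x)=G_{v_0}(f_i(x))$, and property \ref{enu:P1} then shows that $\Pos_w^{i}$ depends on $(u,v_0)$ only through the subset $G_{v_0}^{-1}(\varphi_m(u)) \subseteq \{1,\dots,k_i\}$, giving $\le 2^{k_i}$ values. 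In the complementary case $|w| \le 2^m$ (so $m \le i$), property \ref{enu:P2} shows that $\Pos_w^{i}$ depends on $w$ only through its first $\min(|w|,k_m)$ bits, giving $\le 2^{k_m} \le 2^{k_i}$ values. Summing over $i \in \{0,\dots,n\}$ bounds the single-layer classes (those with exactly one non-empty coordinate) by $(n+1)\cdot 2^{k_n}$, to which we add the single empty class.

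For multi-layer classes the interval $[|w|,|w|+n]$ must meet two or more values of the form $2^m+m$, which forces $m \le \log_2 n$ for every relevant $m$; the number of multi-layer tuples is therefore at most $\prod_{m=0}^{\lfloor\log_2 n\rfloor} 2^{k_m} \le 2^{(\lfloor\log_2 n\rfloor+1)\cdot k_{\lfloor\log_2 n\rfloor}}$. Applying the growth hypothesis $n' \cdot k_{n'} \le k_{2^{n'}}$ at $n' = \lfloor\log_2 n\rfloor$, valid precisely because $n \ge 2^N$, controls this exponent by $\mathcal{O}(k_n)$ and contributes at most $\mathcal{O}(2^{k_n})$ further classes. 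Absorbing the polynomial prefactors then yields the bound $3(n+1)^2 \cdot 2^{k_n}$. The main obstacle I expect is the detailed case analysis of the per-layer bound—specifically, checking that properties \ref{enu:P1} and \ref{enu:P2} really do collapse the superficially rich data $(u,v_0,\varphi_m(u))$ down to the single subset $G_{v_0}^{-1}(\varphi_m(u))$—and carrying out the multi-layer estimate with a precise invocation of the growth hypothesis at scale $\log_2 n$.
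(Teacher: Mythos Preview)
Your approach is essentially the paper's: the lower bound from words of length $2^n$ is identical, and your single-layer/multi-layer dichotomy is a repackaging of the paper's four word types (your Case~1 is Type~I, your Case~2 is Type~II, multi-layer is Type~III). Two points need tightening, however.

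First, the claim that for each fixed $i$ the number of distinct non-empty $\Pos_w^{i}$ is at most $2^{k_i}$ is only established in Case~1, where indeed $\Pos_w^{i}=f_i^{-1}(S)$ for some $S\subseteq\{1,\dots,k_i\}$. In Case~2 your argument shows that $\Pos_w^{i}$ depends only on the first $\min(|w|,k_m)$ bits of $w$, which bounds the count by $2^{k_m}$ \emph{for each fixed pair $(i,m)$}; but the Case~2 sets are not of the form $f_i^{-1}(S)$, so different values of $m$ (equivalently, different lengths $|w|$) with the same $i$ contribute separate collections. This costs an extra factor of $(n+1)$ in the single-layer count---precisely the paper's $\bound{II}\le (n+1)^2\cdot 2^{k_n}$.

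Second, in the multi-layer estimate you bound the exponent by $(\lfloor\log_2 n\rfloor+1)\cdot k_{\lfloor\log_2 n\rfloor}$ and then invoke the growth hypothesis. But the hypothesis only controls $n'\cdot k_{n'}$, so the leftover $+\,k_{\lfloor\log_2 n\rfloor}$ turns your bound into $2^{2k_n}$ rather than $\mathcal{O}(2^{k_n})$, which cannot be absorbed into a polynomial prefactor. The paper's fix is to peel off the $m=0$ term using $k_0=1$, so that $\prod_{m=0}^{M} 2^{k_m}=2\cdot\prod_{m=1}^{M}2^{k_m}\le 2\cdot 2^{M\cdot k_M}$ with $M=\lfloor\log_2 n\rfloor$, and now the hypothesis applies directly to give $\le 2\cdot 2^{k_n}$. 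With these two corrections (and an extra factor of $n$ from summing over the possible short lengths $|w|$ in the multi-layer case), the stated bound $3(n+1)^2\cdot 2^{k_n}$ does follow.
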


\begin{proof}
Let $n\in\N$ with $n\geq 2^N$. First we will determine the number of classes of $\Theta_n(L)$ generated by words of length $2^n$. 
For $u,v\in\Sigma^{2^n}$, if $\varphi_n(u)\not=\varphi_n(v)$, then fix some element $k\in \varphi_n(u)\triangle\varphi_n(v)$. Since $f_n$ is surjective there is a $w\in\Sigma^n$ such that $f_n(w)=k$, and the word $w$ witnesses the fact that $u$ and $v$ are not in the same class. Vice versa, if $\varphi_n(u)=\varphi_n(v)$, then $(u,v)\in\Theta_n(L)$. Now the lower bound immediately follows from \[\ind\Theta_n(L)\geq \left|\{[u]\mid u\in\Sigma^{2^n}\}\right|=|\im \varphi_n| = 2^{k_n}.\]

For the upper bound there is much more work to do.
We will look at the different types of words and bound the number of equivalence classes generated by words of each type separately.

\begin{figure}[ht]
\centering
\def\boxSize{7pt}
\def\patI{north east lines}
\def\patII{crosshatch}
\def\patIII{north west lines}
\def\patIV{crosshatch dots}

\begin{tikzpicture}[scale=1.15]
\def\xl{2.5}
\def\xn{5.2}
\def\xk{8.2}
\draw (0,3pt)--(0,-3pt) node [below] {\tiny 0};
\draw (1.5,3pt)--(1.5,-3pt) node [below] {\tiny $n$};

\draw (\xl,3pt)--(\xl,-3pt) node [below] {\tiny $2^l+l-n$};
\draw (\xl+0.9,3pt)--(\xl+0.9,-3pt) node [below] {\tiny $2^l$};
\draw (\xl+1.5,3pt)--(\xl+1.5,-3pt) node [below] {\tiny $2^l+l$};

\draw (\xn,3pt)--(\xn,-3pt) node [below] {\tiny $2^n$};
\draw (\xn+1.5,3pt)--(\xn+1.5,-3pt) node [below] {\tiny $2^n+n$};

\draw (\xk,3pt)--(\xk,-3pt) node [below] {\tiny $2^k$};
\draw (\xk+1,3pt)--(\xk+1,-3pt) node [below] {\tiny $2^k+k-n$};
\draw (\xk+2.5,3pt)--(\xk+2.5,-3pt) node [below] {\tiny $2^k+k$};

\draw[thick] (0,0) -- (1.5,0);
\draw[thick, dashed] (1.5,0) -- (\xl-0.35,0);
\draw[thick] (\xl-0.3,0) -- (\xl+1.8,0);
\draw[thick, dashed] (\xl+1.85,0) -- (\xn-0.35,0);
\draw[thick] (\xn-0.35,0)-- (\xn+1.8,0);
\draw[thick, dashed] (\xn+1.85,0) -- (\xk-0.35,0);
\draw[thick] (\xk-0.35,0)-- (\xk+2.8,0);

\draw[pattern=\patII] (0,0) rectangle (1.5,\boxSize);

\fill[pattern=\patIV] (\xl-0.3,0) rectangle (\xl,\boxSize);
\draw (\xl-0.3,\boxSize) -- (\xl,\boxSize);
\fill[pattern=\patIV] (\xl+1.5,0) rectangle (\xl+1.8,\boxSize);
\draw (\xl+1.5,\boxSize) -- (\xl+1.8,\boxSize);
\fill[pattern=\patIV] (\xn-0.3,0) rectangle (\xn,\boxSize);
\draw (\xn-0.3,\boxSize) -- (\xn,\boxSize);
\fill[pattern=\patIV] (\xn+1.5,0) rectangle (\xn+1.8,\boxSize);
\draw (\xn+1.5,\boxSize) -- (\xn+1.8,\boxSize);
\fill[pattern=\patIV] (\xk-0.3,0) rectangle (\xk+1,\boxSize);
\draw (\xk-0.3,\boxSize) -- (\xk+1,\boxSize);
\fill[pattern=\patIV] (\xk+2.5,0) rectangle (\xk+2.8,\boxSize);
\draw (\xk+2.5,\boxSize) -- (\xk+2.8,\boxSize);

\draw[pattern=\patIII] (\xl,0) rectangle (\xl+0.9,\boxSize);

\draw[pattern=\patI] (\xl+0.9,0) rectangle (\xl+1.5,\boxSize);
\draw[pattern=\patI] (\xn,0) rectangle (\xn+1.5,\boxSize);
\draw[pattern=\patI] (\xk+1,0) rectangle (\xk+2.5,\boxSize);

\end{tikzpicture}

\begin{tikzpicture}[scale=1.15]
\draw[pattern=\patI] (0,0) rectangle  node[right,xshift=4pt] {I} (\boxSize,\boxSize);
\draw[pattern=\patIII] (3cm,0) rectangle  node[right,xshift=4pt] {II} (3cm+\boxSize,\boxSize);
\draw[pattern=\patII] (6cm,0) rectangle  node[right,xshift=4pt] {III} (6cm+\boxSize,\boxSize);
\draw[pattern=\patIV] (9cm,0) rectangle  node[right,xshift=4pt] {IV} (9cm+\boxSize,\boxSize);

\end{tikzpicture}
\caption{The four different types of words}
\end{figure}
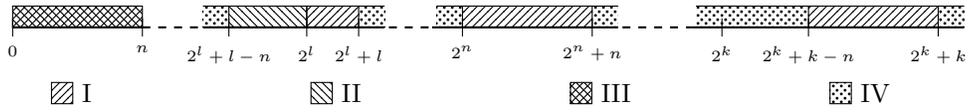

\begin{itemize}
\item Words of type I are of the form $uv$ where $u\in\Sigma^{2^k}$ and $v\in\Sigma^{(k)}$ for some $k\geq \log_2n$. All words $v'\in\Sigma^{(n)}$ for which $uvv'\in L$ have to be of length $k-|v|$. Recall $uvv'\in L$ iff $f_k(vv')\in\varphi_k(u)$. Hence the witnesses can only be used to determine $\varphi_k(u)$. Note that the decomposition into $u$ and $v$ is unique.

\item Every word $u$ of type II is in $\Sigma^{2^l-k}$ for some $l$ and $k$ with $\log_2n\leq l\leq n$ and $k+l\leq n$. All words $w\in\Sigma^{(n)}$ 
with $uw\in L$ are of the form $u'v$ with $u'\in\Sigma^k$ and $v\in\Sigma^l$. The difference to words of type I is that now $\varphi_k(uu')$ depends on the choice of the witness. Note that this could potentially lead to a lot of new equivalence classes.

\item The words of type III are all words with length at most $n-1$. Some of these words $u$	 are short enough such that there can be positive witnesses of $u$ in $\Sigma^{(n)}$ with different lengths. 

\item Finally, for all words $u$ of type IV we have $uw\notin L$ for all $w\in\Sigma^{(n)}$. Hence these words are all in the same equivalence class.
\end{itemize}

We denote number of equivalence classes the words of type I generate, i.e., 
$|\{[u]\mid u\text{ is of type I}\}|$,
by $\bound{I}$. Analogously we define $\bound{II}$, $\bound{III}$, and $\bound{IV}$. Since any word in $\Sigma^*$ is of one of the four types we have that
\[\ind\Theta_n(L)\leq \bound{I}+\bound{II}+\bound{III}+\bound{IV}.\]
We have already noted that $\bound{IV}=1$.
Before we determine an upper bound for $\bound{I}$, $\bound{II}$, and $\bound{III}$ we will look at sets of the form $\Sigma^{2^n}\cdot\{0\}^k$ and determine an upper bound for the size of
$\{[u]\mid u\in \Sigma^{ 2^n}\cdot0^k\}$
for $k\in\{0,\dots,n\}$.

Let $u_1,u_2\in\Sigma^{ 2^n}\cdot0^k$. Then $u_1=u'_1v$, $u_2=u'_2v$ for some $u'_1,u'_2\in\Sigma^{2^n}$ and $v=0^k$.
If $(u_1,u_2)\notin\Theta_n(L)$, then there is some $w\in\Sigma^{n-k}$ which bares witness to this fact. Now $vw$ witnesses $(u'_1,u'_2)\notin\Theta_n(L)$. Hence the words in $\Sigma^{ 2^n}\cdot0^k$ give rise to at most as many classes as the words in $\Sigma^{2^n}$, which, as we have already shown, decompose into $2^{k_n}$ classes.
Therefore we conclude 
\begin{align*}
\left|\{[u]\mid u\in \Sigma^{ 2^n}\cdot0^k\}\right|\leq 2^{k_n}\text{ for all $n\in\N$, $k\in\{0,\dots,n\}$}.\tag{$*$}
\end{align*}


To find an upper bound for $\bound{I}$ we will show that every word of type~I is already in the same class as some word in $\Sigma^{2^n}\cdot\{0\}^{(n)}$. Let $uv$ be a word of type I with $u\in\Sigma^{2^k}$ and $v\in\Sigma^{(k)}$. We have already noticed that then $\Pos^{(n)}_{uv}\subseteq \Sigma^{k-|v|}$. Define $v'=0^{n-k+|v|}$. Since $n-k+|v|+k-|v|=n$ and $\varphi_n$ is surjective there is an $u'$ such that
$\varphi_n(u')=f_n(v' \Pos^{(n)}_{uv})$.
Note that $u'v'\in\Sigma^{2^n}\cdot\{0\}^{(n)}$. 
Our next goal is to show that $uv$ and $u'v'$ are in the same equivalence class, i.e., \[w\in \Pos^{(n)}_{uv} \iff f_n(v'w)\in \varphi_n(u')\stackrel{\mathrm{Def.}}{\iff}w\in \Pos^{(n)}_{u'v'} .\]
The $\Rightarrow$ direction is straightforward. If $w\in \Pos^{(n)}_{uv}$, then $v'w\in v'\Pos^{(n)}_{uv}$ and therefore $f_n(v'w)\in f_n(v'\Pos^{(n)}_{uv})=\varphi_n(u')$.
For the $\Leftarrow$ direction take some word $w$ with $f_n(v'w)\in\varphi_n(u')$. By choice of $u'$ there is a $w'\in \Pos^{(n)}_{uv}$ such that $f_n(v'w)=f_n(v'w')$.
Then
\begin{align*}
f_n(v'w)=f_n(v'w') &{}\iff f_{k-|v|}(w)=f_{k-|v|}(w')\tag{$|v'|$ times \ref{enu:P1}}\\
&\iff f_k(vw)=f_k(vw').\tag{$|v|$ times \ref{enu:P1}}
\end{align*}
Because $w'\in \Pos^{(n)}_{uv}$ implies $f_k(vw)=f_k(vw')\in\varphi_k(u)$ we conclude $w\in \Pos^{(n)}_{uv}$.
Thus $uv$ is in the same class as $u'v'$.

Consequently, we can finally give an upper bound for $\bound{I}$:
\begin{align*}
\bound{I}\leq \left|\{[w]\mid w\in \Sigma^{2^n}\cdot\{0\}^{(n)}\}\right|
\leq \sum_{k=0}^n \left|\{[w]\mid w\in \Sigma^{2^n}\cdot0^{k}\}\right|
\stackrel{(*)}\leq(n+1)\cdot2^{k_n}.
\end{align*}

For $\bound{II}$ we will consider all words of type II with the same length $2^l-k$ for some $l,k\in\N$ with $l+k\leq n$. For fixed $l$ and $k$ we will bound the size of $\{\Pos^{(n)}_u \mid u\in \Sigma^{2^l-k}\}$ by $2^{k_l}$. To do this let $u'\in\Sigma^k$. We show that
\begin{align*}
\{\Pos^{(n)}_{u} \mid u\in\Sigma^{2^l-k}\} &\to \im\varphi_l\\
\Pos^{(n)}_{u}&\mapsto \varphi_l(uu')
\end{align*}
is a well defined injective map. Firstly, we tackle the problem of well definedness. Let $u_1,u_2\in\Sigma^{2^l-k}$ such that $\Pos^{(n)}_{u_1}=\Pos^{(n)}_{u_2}$. Then
\begin{align*}
f_l(v )\in\varphi_l(u_1u')\iff u'v \in \Pos^{(n)}_{u_1}
\iff u'v \in \Pos^{(n)}_{u_2}
\iff f_l(v)\in\varphi_l(u_2u')
\end{align*}
and $\varphi_l(u_1u')=\varphi_l(u_2u')$.
For injectivity take two words $u_1,u_2\in \Sigma^{2^l-k}$ with $\varphi_l(u_1u')=\varphi_l(u_2u')$. Let $w\in \Pos^{(n)}_{u_1}$. Then decompose $w$ into $u''v'$ with $u''\in\Sigma^{k}$ and $v'\in\Sigma^l$. Now, by definition, $v'\in\varphi_l(u_1u'')$. Furthermore, from \ref{enu:P2} we can deduce $\varphi_l(u_1u'')=\varphi_l(u_2u'')$, and thus $v'\in\varphi_l(u_2u'')$. Because of this $w\in \Pos^{(n)}_{u_2}$ and $\Pos^{(n)}_{u_1}\subseteq \Pos^{(n)}_{u_2}$. By exchanging the roles of $u_1$ and $u_2$ equality of $\Pos^{(n)}_{u_1}$ and $\Pos^{(n)}_{u_2}$ follows. Therefore
\[\left|\{[u] \mid u\in \Sigma^{2^l-k}\}\right|=\left|\{\Pos^{(n)}_{u} \mid u\in\Sigma^{2^l-k}\}\right|\leq |\im\varphi_l|= 2^{k_l}\leq 2^{k_n}.\]
Recall that $l+k\leq n$. Consequently,
\[\bound{II}\leq (n+1)^2\cdot 2^{k_n}.\]

To bound $\bound{III}$ we use the same method we used for $\bound{II}$. Fix a $k<n$ and consider all classes generated by words in $\Sigma^k$.
Firstly, note that for an $u\in\Sigma^k$ there can be words of different lengths in $\Pos^{(n)}_u$, but we still know that $\Pos^{(n)}_u\subseteq \bigcup_{l\leq m} \Sigma^{2^l+l-k}$ where $m=\max \{l\in\N\mid 2^l + l-k\leq n\}$ and $\Sigma^{-i}=\emptyset$. 
Furthermore, $m\leq\lceil\log_2n\rceil$, because
\[2^{\lceil\log_2n\rceil+1}+\lceil\log_2n\rceil+1-k\geq 2\cdot n+1-(n-1)> n.\]

But $\Pos^{2^l+l-k}_u$ contains by definition only witnesses of the same length, hence we can apply the same argument as before to 
obtain $|\{\Pos^{2^l+l-k}_{u} \mid u\in\Sigma^{k}\}|\leq |\im\varphi_l|\leq 2^{k_l}$.
Since $k_0=1$ we deduce
\begin{align*}
\left|\{\Pos^{(n)}_u \mid u\in\Sigma^{k}\}\right|\leq\prod_{l\leq m}\left|\{\Pos^{2^l+l-k}_{u} \mid u\in\Sigma^{k}\}\right|
\leq\prod_{l\leq m} 2^{k_l}
\leq 2\cdot\prod_{l=1}^m 2^{k_l}
\leq 2\cdot 2^{m\cdot k_m}.
\end{align*}

We know that $m\leq \log_2n$, also $n\geq 2^N$ implies $\log_2n\geq N$. Furthermore, if we assume for the sake of readability that $\log_2n$ is a natural number, then by assumtion that $k_n$ grows sufficiently fast we have 
\begin{align*}
m\cdot k_m&\leq \log_2n\cdot k_{\log_2n}\leq k_n.
\end{align*}
As there are $n$ possible values for $k$ we obtain $\bound{III}\leq 2\cdot n\cdot 2^{k_n}$.

Now we can finally give an upper bound for $\ind\Theta_n(L)$:
\begin{align*}
\ind\Theta_n(L)&\leq \bound{I}+\bound{II}+\bound{III}+\bound{IV}\\
&\leq (n+1)\cdot 2^{k_n}+(n+1)^2\cdot 2^{k_n}+2\cdot n\cdot 2^{k_n}+1\\
&\leq 3\cdot(n+1)^2\cdot 2^{k_n}.
\end{align*}
This finishes the proof.
\end{proof}
Note that this lemma can be applied to any surjective functions $\varphi_n$ and $f_n$ fulfilling the properties \ref{enu:P1} and \ref{enu:P2}, not just the $\varphi_n$ and $f_n$ we defined.
Now we can, using this lemma, show the main result from this section.
\begin{theorem}\label{the:entropyIsSur}
For any alphabet $\Sigma$ with at least two letters the entropy function
\begin{align*}
\eta\colon \mathcal{P}(\Sigma^*)&\to[0,\infty]\\
L&\mapsto \eta(L)
\end{align*}
is surjective.
\end{theorem}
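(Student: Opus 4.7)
The plan is to apply Lemma~\ref{lem:surjectiveBounds} to a suitable sequence $(k_n)$ tailored to each target entropy $\alpha\in[0,\infty]$. The two-sided bound in that lemma gives
\[\frac{k_n}{n}\;\leq\;\frac{\log_2\ind\Theta_n(L)}{n}\;\leq\;\frac{k_n+2\log_2(n+1)+\log_2 3}{n}\]
for all $n\geq 2^N$, and the $O(\log n)/n$ correction vanishes, so $\eta(L)=\limsup_{n\to\infty} k_n/n$. The task therefore reduces to producing, for each $\alpha$, a suitable sequence satisfying the fast-growth hypothesis $n\cdot k_n\leq k_{2^n}$ (eventually) and with $\limsup_{n\to\infty} k_n/n=\alpha$.

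I would handle three cases. For $\alpha=\infty$, take $k_n=2^n$: suitability is immediate, the growth condition $n\cdot 2^n\leq 2^{2^n}$ holds for $n\geq 2$, and $k_n/n\to\infty$. For $\alpha=0$, take $k_n=\lceil\sqrt{n+1}\,\rceil$: monotonicity and $k_0=1$ are clear, doubling follows from $\sqrt{n+1}\leq\sqrt{2n}$, the growth condition reduces to $n^{3/2}\leq 2^{n/2}$, which holds for large $n$, and $k_n/n\to 0$. For $\alpha\in(0,\infty)$, set $k_0=1$ and $k_n=\min(2k_{n-1},\lceil\alpha n\rceil)$ for $n\geq 1$: doubling is built in, monotonicity follows because $\lceil\alpha n\rceil$ is non-decreasing and $\min(2k_{n-1},\lceil\alpha n\rceil)\geq k_{n-1}$ by induction, and an easy induction shows that past the initial exponential ramp-up (where $\lceil\alpha n\rceil>2k_{n-1}$) the sequence stabilises as $k_n=\lceil\alpha n\rceil$, hence $k_n/n\to\alpha$. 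The growth condition reduces to $\alpha n^2\leq\alpha\cdot 2^n$ for large $n$.

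For each of these sequences, feeding it into the construction preceding Lemma~\ref{lem:surjectiveBounds} yields a language $L\subseteq\{0,1\}^*$ with $\eta(L)=\alpha$, establishing surjectivity over $\Sigma=\{0,1\}$. For an alphabet $\Sigma'\supseteq\{0,1\}$ with more letters, viewing the same $L$ as a subset of $(\Sigma')^*$ alters $\ind\Theta_n$ by at most one: any word containing a letter outside $\{0,1\}$ has an empty positive-witness set (since $L\subseteq\{0,1\}^*$), so all such words lie in a single ``dead'' class, and no such witness distinguishes two words of $\{0,1\}^*$. Hence the entropy is preserved and the surjectivity lifts to every alphabet of size at least two.

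The main obstacle is the fast-growth requirement $n\cdot k_n\leq k_{2^n}$, which forces even the $\alpha=0$ sequence to grow at least polynomially and, for very small $\alpha>0$, makes the initial stretch where $k_n=1$ uncomfortably long before the growth condition finally kicks in; checking that the condition eventually holds (rather than for all $n$) is the one subtlety. The remaining verifications---monotonicity, the doubling bound, and $k_0=1$---are routine given the explicit formulas above.
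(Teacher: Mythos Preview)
Your proposal is correct and follows essentially the same approach as the paper: reduce everything to Lemma~\ref{lem:surjectiveBounds} and exhibit, for each target value, a suitable sequence $(k_n)$ with the fast-growth property and $\limsup k_n/n$ equal to that value. The only cosmetic differences are that the paper dispatches $0$ and $\infty$ by citing already-known examples rather than by feeding sequences into the lemma, and for $\alpha\in(0,\infty)$ it writes the sequence in closed form as $k_n=\max\{\min\{\lceil n\alpha\rceil,2^n\},1\}$ instead of your equivalent recursive $\min(2k_{n-1},\lceil\alpha n\rceil)$; your explicit treatment of the passage to larger alphabets is a point the paper leaves implicit.
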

\begin{proof}
For 0 and $\infty$ we have already seen that there are languages with that entropy.
Thus let $x$ be a positive real number. We would like to use the sequence $k_n=\lceil n\cdot x\rceil$, but it is not suitable, because $k_0=0$. Furthermore, $k_1$ could be larger than $2$. Because of this we define $k_n=\max \{\min \{\lceil n\cdot x\rceil,2^n\},1\}$. Now the sequence $(k_n)_{n\in\N}$ is suitable. 
Since $\Sigma$ has at least two letters we can define $f_n$, $\varphi_n$, and $L$ as above. Clearly, there is an $N_1\in\N$ such that $k_n=\lceil n\cdot x\rceil$ for all $n\geq N_1$. Furthermore, there is an $N_2\in\N$ such that for all $n\geq \max\{N_1,N_2\}$
\begin{align*}
n\cdot k_n=n\cdot \lceil n\cdot x\rceil\leq n\cdot (n\cdot x+1)=n\cdot(n+\frac{1}{x})\cdot x\leq 2^n\cdot x\leq k_{2^n}.
\end{align*}

Hence we apply Lemma \ref{lem:surjectiveBounds} with $N=\max\{N_1,N_2\}$ to obtain
\[2^{k_n}\leq \ind\Theta_n(L)\leq 3\cdot(n+1)^2\cdot 2^{k_n}\]
for all $n\geq 2^N$.
Now we easily compute
\begin{align*}
\eta(L)\geq \limsup_{n\to\infty}\frac{\log_2(2^{k_n})}{n}=\limsup_{n\to\infty}\frac{\lceil n\cdot x\rceil}{n}=x
\end{align*}
and
\begin{align*}
\eta(L)\leq \limsup_{n\to\infty}\frac{\log_2(3\cdot(n+1)^2\cdot 2^{k_n})}{n}=x.
\end{align*}

Therefore, $\eta(L)=x$, and we conclude that $\eta$ is surjective.
\end{proof}

Naturally the question arises whether same holds for unary alphabets.
In the next subsection we will address this question.

\subsection{Unary Languages}\label{sec:UnaryLanguages} 

For the remainder of this section let $\Sigma=\{a\}$ be a unary alphabet. 
We shall write $n$ instead of $a^n$. 
Then we can view $\Sigma^*$ as $\N$ and a unary language $L$ is just a subset of $\N$. 
We will show that the entropy of a language over a unary alphabet can be bounded by one and we will show that this upper bound is tight. 
Note that there are undeciable unary languages, for example we can encoded the halting problem using an enumeration $M_1,M_2,\dots$ of all Turing machines and define $L=\{i\in\N\mid M_i\text{ halts on input $\varepsilon$}\}$.
This makes the following result a bit surprising.

\begin{theorem}\label{the:unaryMax}
Let $L$ be a unary language. Then $\eta(L)\leq1$.
\end{theorem}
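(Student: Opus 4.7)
The plan is to exploit the fact that for a unary alphabet, the equivalence class of $a^i$ under $\Theta_n(L)$ carries very little information. View $L$ as a subset of $\N$. A word $a^k$ with $0\le k\le n$ witnesses $(a^i,a^j)\notin\Theta_n(L)$ exactly when precisely one of $i+k$ and $j+k$ belongs to $L$. Therefore $a^i$ and $a^j$ lie in the same $\Theta_n(L)$-class if and only if the two length-$(n+1)$ binary vectors
\[\bigl(\mathbf{1}_{L}(i+k)\bigr)_{k=0}^{n}\quad\text{and}\quad\bigl(\mathbf{1}_{L}(j+k)\bigr)_{k=0}^{n}\]
coincide. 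In other words, the $\Theta_n(L)$-class of $a^i$ is determined by the pattern of $L$ on the window $\{i,i+1,\dots,i+n\}$.

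Since there are only $2^{n+1}$ such binary patterns, I get the immediate bound $\ind\Theta_n(L)\le 2^{n+1}$. Plugging this into the definition of entropy yields
\[\eta(L)=\limsup_{n\to\infty}\frac{\log_2(\ind\Theta_n(L))}{n}\le\limsup_{n\to\infty}\frac{n+1}{n}=1,\]
as required. There is essentially no obstacle here: the whole argument reduces to the observation that the extension information available to a suffix of length at most $n$ is a single binary word of length $n+1$, and the $\limsup$ calculation that follows is routine.
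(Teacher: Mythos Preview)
Your proof is correct and essentially identical to the paper's: both bound $\ind\Theta_n(L)\le 2^{n+1}$ and take the $\limsup$. The only difference is that the paper dismisses the bound as ``clear'' while you spell out that the class of $a^i$ is determined by the characteristic vector $(\mathbf{1}_L(i+k))_{k=0}^{n}$.
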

\begin{proof} 
Let $L\subseteq\N$ be a unary language. Clearly, $\ind\Theta_n(L)\leq 2^{n+1}$ and therefore
\begin{align*}
\eta(L)&=\limsup_{n\to\infty}\frac{\log_2 (\ind \Theta_n(L))}{n}\leq \limsup_{n\to\infty}\frac{\log_2 (2^{n+1})}{n}=1.
\end{align*}
This finishes the proof.
\end{proof}

Using a construction similar to the one in the previous subsection we can show that this bound is tight.
\begin{example}
For any $n\in\N$ let $\varphi_n\colon\{0,\dots,2^{n+1}-1\}\to \mathcal{P}(\{0,\dots,n\})$ be a bijection.
Note that if we know that a number is of the form $2^n+k$ and $k< 2^n$, then we can uniquely determine $k$ and $n$ from that number. With this in mind we define:
\[\UInfty=\{2^{n+2^{n+m}}+k\mid n\in\N, m\leq 2^{n+1}-1, k\in\varphi_n(m)\}.\]

Observe that from any number of the form $2^{n+2^{n+m}}+k$ uniquely determines the numbers $n$, $m$, and $k$.
To compute the entropy of $\UInfty$ let $n\in\N$. Firstly, we denote $2^{n+2^{n+m}}$ by $w_{n,m}$. Now let us consider the set $\{w_{n,m}\mid m\leq 2^{n+1}-1\}$. If $w_{n,m}\neq w_{n,m'}$, then $m\neq m'$ and since $\varphi_n$ is injective we have that there is a $k\in \varphi_n(m)\triangle\varphi_n(m')$. This $k$ witnesses that $(w_{n,m},w_{n,m'})\notin\Theta_n(\UInfty)$. Hence 
\[\ind\Theta_n(\UInfty)\geq |\{w_{n,m}\mid m\leq 2^{n+1}-1\}|=2^{n+1}.\]
Together with Theorem $\ref{the:unaryMax}$ we can conclude that $\eta(\UInfty)=1$.
\end{example}

We have just seen that the entropy function is surjective over every alphabet with at least two elements. Hence we conjecture that the following holds.
\begin{conjecture}\label{con:surjUnary}
For any unary alphabet $\Sigma$ we have that the entropy function 
\begin{align*}
\eta\colon \mathcal{P}(\Sigma^*)&\to[0,1]\\
L&\mapsto \eta(L)
\end{align*}
 is surjective.
\end{conjecture}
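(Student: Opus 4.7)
The plan is to reduce the conjecture to a symbolic-dynamics question and then adapt the techniques of the previous two sections. The key reformulation is this: for a unary language $L\subseteq\N$ with characteristic sequence $\chi_L\in\{0,1\}^\N$, the pair $(a^i,a^j)$ lies in $\Theta_n(L)$ exactly when $\chi_L(i+k)=\chi_L(j+k)$ for all $k\in\{0,\dots,n\}$, so $\Pos^{(n)}_{a^i}\subseteq\{0,\dots,n\}$ is a complete invariant of the class of $a^i$. Hence $\ind\Theta_n(L)$ equals the number of distinct length-$(n+1)$ windows of $\chi_L$ that occur at positions in $\N$, and the problem reduces to constructing, for every $x\in[0,1]$, a binary sequence whose length-$n$ factor count grows like $2^{xn+o(n)}$.

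Following the proof of Theorem~\ref{the:entropyIsSur}, I would pick a suitable sequence $(k_n)_{n\in\N}$ with $k_n/n\to x$, an injection $\varphi_i\colon\{0,\dots,2^{k_i}-1\}\to\mathcal{P}(\{0,\dots,i\})$ for every $i\in\N$, and then mimic the $\UInfty$ construction, cutting its addresses per level from $2^{i+1}$ down to $2^{k_i}$:
\[
L=\bigl\{2^{i+2^{i+m}}+k \bigm| i\in\N,\ m\leq 2^{k_i}-1,\ k\in\varphi_i(m)\bigr\}.
\]
The lower bound $\ind\Theta_n(L)\geq 2^{k_n}$ is immediate from the $\UInfty$ argument, since the $2^{k_n}$ positions $2^{n+2^{n+m}}$ are pairwise $\Theta_n$-inequivalent by injectivity of $\varphi_n$; this already yields $\eta(L)\geq x$.

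The main obstacle will be the matching upper bound $\eta(L)\leq x$. The enormous spacing of the positions $2^{i+2^{i+m}}$ guarantees that every length-$(n+1)$ window meets at most one block $[2^{i+2^{i+m}},2^{i+2^{i+m}}+i]$, so equivalence classes split by the level $i$ of the block encountered. Blocks at levels $i\leq n$ contribute at most $\sum_{i\leq n}2^{k_i}(n+i+1)\leq\operatorname{poly}(n)\cdot 2^{k_n}$ distinct patterns. The delicate case is $i>n$, where the window lies strictly inside a single block and sees a length-$(n+1)$ sub-window of the indicator vector of $\varphi_i(m)$; with arbitrarily chosen $\varphi_i$, a single such block can already realise up to $2^{n+1}$ distinct sub-windows, destroying the bound. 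The remedy is to choose the indicators of the $\varphi_i(m)$ coherently across levels: fix a subshift $\Omega\subseteq\{0,1\}^\N$ of topological entropy $x$ (for $x\in(0,1]$ a $\beta$-shift with $\beta=2^x$ works, and for $x=0$ any Sturmian shift) and take the indicators of $\mathcal{S}_i:=\im\varphi_i$ to be length-$(i+1)$ factors of $\Omega$. Then every length-$(n+1)$ sub-window of every indicator is again a factor of $\Omega$, so levels $i>n$ jointly contribute at most $p_\Omega(n+1)\leq 2^{x(n+1)+o(n)}=2^{k_n+o(n)}$ further patterns. Combining the two estimates and mimicking the final computation in the proof of Theorem~\ref{the:entropyIsSur} gives $\eta(L)=x$; together with the already handled values $x=0$ and $x=1$ (the latter via $\UInfty$), this would establish surjectivity and settle Conjecture~\ref{con:surjUnary}.
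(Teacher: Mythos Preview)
The statement you are addressing is Conjecture~\ref{con:surjUnary}, which the paper explicitly leaves \emph{open}: immediately after stating it the authors write ``Unfortunately, we were not able to prove this conjecture.'' There is therefore no proof in the paper to compare your attempt against; what you have written is a proposed resolution of an open problem, not a reconstruction of an existing argument.

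Your reduction is correct and is the natural one: for a unary language $L$ with characteristic sequence $\chi_L\in\{0,1\}^\N$, the index $\ind\Theta_n(L)$ is exactly the factor complexity $p_{\chi_L}(n+1)$, so the conjecture is equivalent to producing, for each $x\in[0,1]$, a one-sided binary sequence with $\limsup_n \tfrac{1}{n}\log_2 p(n)=x$. Once the problem is phrased this way, however, the $\UInfty$-style block construction is more elaborate than necessary. Since the $\beta$-shift $\Omega_\beta$ (for $1<\beta\leq 2$) is topologically transitive, it contains a point $\omega$ with dense forward orbit, and any such point already satisfies $p_\omega(n)=p_{\Omega_\beta}(n)$; taking $L=\{i\in\N\mid \omega_i=1\}$ gives $\eta(L)=\log_2\beta=x$ directly, with no separate lower- and upper-bound arguments and no need to splice blocks together. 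You are already invoking $\beta$-shifts for the upper bound, so this shortcut costs no additional machinery.

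Your layered construction does appear to go through as well, but several details are only gestured at: you must verify that $p_{\Omega}(i+1)\geq 2^{k_i}$ so that an injective $\varphi_i$ with image inside the length-$(i{+}1)$ factors of $\Omega$ actually exists; you must account for windows straddling a block boundary at levels $i>n$, which see patterns of the form $0^a w$ or $w\,0^b$ with $w$ a factor of $\Omega$ (contributing an extra factor of $O(n)$, harmless); and you must note that windows meeting more than one block can only occur at positions $O(n)$, contributing $O(n)$ further patterns. None of these is a genuine obstacle, but each needs a line of justification before the argument is complete.
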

Unfortunately, we were not able to prove this conjecture. In the next section we will look at the entropy of decision problems and use an observation there to strengthen the surjectivity result for at least two element alphabets.

\section{Entropy of Decision Problems}\label{sec:DecisionProblems}

In this chapter we will connect topological entropy with decision problems. First we will compute the entropy of the \textbf{NP}-complete problem \textbf{SAT}. Then we will use padding to show that the entropy of any language can be reduced to zero. In particular this shows that there are undecidable languages with zero entropy.

To be able to compute the entropy of \textbf{SAT}, we need to define a suitable encoding. We use the alphabet $\{(,),\wedge,\vee,\neg,0,1\}$. 
To encode a formula $\varphi$ we replace every variable $x_n$ by $\operatorname{bin}(n)$, the binary representation of $n$.
We denote the encoded formula by $\langle\varphi\rangle$. For example $\langle x_1\wedge x_2\rangle=1\wedge10$.
Now we can define
\[\textbf{SAT}=\{\langle\varphi\rangle\mid \text{$\varphi$ is satisfiable}\}.\]


\begin{lemma}
The language \textbf{SAT} has infinite entropy.
\end{lemma}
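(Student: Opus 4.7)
The plan is to produce, for each $k\in\N$, a family of $2^{2^k}$ strings that are pairwise separated by witnesses of length $O(k\log k)$. Index the family by Boolean functions $f\colon\{0,1\}^k\to\{0,1\}$: for each such $f$ fix a propositional formula $\varphi_f$ on the variables $x_1,\dots,x_k$ whose set of satisfying assignments equals $f^{-1}(1)$ (any concrete representation will do; its size is irrelevant). Form $u_f$ by encoding $\varphi_f$ together with a trailing connective, so that further conjuncts can be appended and the result parses as $\varphi_f$ conjoined with whatever is appended.

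For each $a=(a_1,\dots,a_k)\in\{0,1\}^k$ let $w_a$ be the encoding of $\ell_1\wedge\dots\wedge\ell_k$ (plus any closing syntax), where $\ell_i$ is $x_i$ if $a_i=1$ and $\neg x_i$ if $a_i=0$. Then $u_fw_a$ encodes a formula equivalent to $\varphi_f\wedge\ell_1\wedge\dots\wedge\ell_k$, which is satisfiable iff $f(a)=1$. Hence for any distinct Boolean functions $f_1\ne f_2$, choosing an $a$ on which they disagree yields a common witness $w_a$ with $u_{f_1}w_a\in\textbf{SAT}\notiff u_{f_2}w_a\in\textbf{SAT}$.

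Since each $\operatorname{bin}(i)$ for $i\le k$ has length at most $\lceil\log_2(k+1)\rceil$, we get $|w_a|\le Ck\log k$ for an absolute constant $C$ and all $k\ge 2$. Setting $n_k=\lceil Ck\log k\rceil$, the $2^{2^k}$ strings $u_f$ lie in pairwise distinct classes of $\Theta_{n_k}(\textbf{SAT})$, and therefore
\[
\frac{\log_2\ind\Theta_{n_k}(\textbf{SAT})}{n_k}\ \ge\ \frac{2^k}{Ck\log k}\longrightarrow\infty\quad\text{as }k\to\infty.
\]
Since $n_k\to\infty$, the $\limsup$ defining $\eta(\textbf{SAT})$ equals $+\infty$, proving $\eta(\textbf{SAT})=\infty$.

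The only fiddly point is the syntactic bookkeeping: picking a concrete convention for parenthesization so that $u_f$ really ends in a position where a conjunct can be appended and $u_fw_a$ parses as the intended formula. Everything else is elementary counting; no complexity-theoretic ingredient beyond the trivial observation that fixing all variables collapses satisfiability to evaluation is required.
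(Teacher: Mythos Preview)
Your proof is correct but proceeds by a route dual to the paper's. The paper uses $2^n$ variables: its $2^{2^n}$ test strings are the encodings of complete conjunctions $L_1\wedge\dots\wedge L_{2^n}$ with $L_i\in\{x_i,\neg x_i\}$ (so each test string pins down a full assignment), and the separating witness is a \emph{single} appended literal $\wedge\operatorname{bin}(k)$ of length at most $n+1$. You instead use only $k$ variables: your $2^{2^k}$ test strings encode arbitrary Boolean functions on $x_1,\dots,x_k$, and the separating witness is a \emph{full assignment} $\ell_1\wedge\dots\wedge\ell_k$ of length $O(k\log k)$. The roles of ``long formula encoding an assignment'' and ``short witness naming a coordinate'' are exactly swapped between the two arguments. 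The paper's version sidesteps your parenthesization issue entirely (its test strings are already flat conjunctions, so appending $\wedge x_k$ is unambiguous) and yields the cleaner ratio $2^n/(n+1)$; your version makes the source of the double exponential (the $2^{2^k}$ Boolean functions on $k$ inputs) more explicit and would generalise more readily to other evaluation-type problems. Both give $\eta(\textbf{SAT})=\infty$ with no further work.
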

\begin{proof}
Consider the set 
\[\{\langle L_1\wedge\dots\wedge L_{2^n}\rangle\mid L_1\in\{x_1,\neg x_1\},\dots, L_{2^n}\in\{x_{2^n},\neg x_{2^n}\}\}.\]
Take two words $w_1=\langle\varphi_1\rangle$ and $w_2=\langle\varphi_2\rangle$ from this set. If $w_1\not=w_2$, then there is some $k\in\{1,\dots,2^n\}$ such that the $k$\textsuperscript{th} literal of $\varphi_1$ and $\varphi_2$ differ. Without loss of generality assume that the $k$\textsuperscript{th} literal of $\varphi_1$ is $x_k$ and the $k$\textsuperscript{th} literal of $\varphi_2$ is $\neg x_k$. Then $\varphi_1\wedge x_k$ is satisfiable and $\varphi_2\wedge x_k$ is not.
Note that $|\wedge\operatorname{bin}(k)|\leq 1+\log_2{2^n}=1+n$. Therefore $\wedge\operatorname{bin}(k)$ witnesses $(w_1,w_2)\notin\Theta_{n+1}(\textbf{SAT})$.
Since the set contains $2^{2^n}$ words we can now show that infinity is a lower bound for the entropy of \textbf{SAT}
\begin{align*}
\eta(\textbf{SAT})&\geq\limsup_{n\to\infty}\frac{\log_2(2^{2^n})}{n+1}=\infty.
\end{align*}
This concludes the proof.
\end{proof}

Next we will discuss the effect padding has on the complexity of a language. In complexity theory padding can be used to decrease the complexity of a language. What happens for topological entropy? For a language $L$ over $\Sigma$ define
\[\PAD(L)=\{uv\mid u\in L, v\in \Sigma^{2^{{|u|}}}\}.\]
Note that if $L$ is in \textbf{EXPTIME}, then $\PAD(L)$ is in \textbf{P}. So $\PAD(L)$ is much easier than $L$, and this decrease in complexity is also reflected in the topological entropy of $\PAD(L)$.

\begin{theorem}
Let $L$ be a language over $\Sigma$ with $|\Sigma|\geq 2$. Then
\[\eta(\PAD(L))=0.\]
\end{theorem}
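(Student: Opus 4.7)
My plan is to prove $\eta(\PAD(L))=0$ by showing that $\ind\Theta_n(\PAD(L))$ grows at most polynomially in $n$. Fix $w\in\Sigma^*$ of length $\ell$. A word $x$ with $|x|\leq n$ lies in $\Pos_w^{(n)}$ iff $wx=uv$ for some $u\in L$ and $v\in\Sigma^{2^{|u|}}$, equivalently $\ell+|x|=k+2^k$ with $k:=|u|$. Splitting on whether (i) $k\leq\ell$, so $u$ is the length-$k$ prefix of $w$ and $|x|=k+2^k-\ell$, or (ii) $k>\ell$, so $w$ is a proper prefix of $u$, one sees that (ii) forces $|x|\geq 2^{\ell+1}$ and hence $\ell<\log_2 n$. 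Consequently, for $\ell>\log_2 n$ only case (i) occurs.

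When only case (i) applies, once $k$ is fixed the witness length $k+2^k-\ell$ is determined, and $wx\in\PAD(L)$ holds for \emph{every} $x$ of that length iff the length-$k$ prefix of $w$ lies in $L$, and for no $x$ otherwise. Hence
\[\Pos_w^{(n)}=\bigsqcup_{k\in S(w)}\Sigma^{\,k+2^k-\ell},\qquad S(w)\subseteq I_\ell:=\{k\leq\ell:\ell\leq k+2^k\leq\ell+n\},\]
so that the class $[w]$ is determined by the single set $T(w):=\{k+2^k-\ell:k\in S(w)\}\subseteq\{0,\ldots,n\}$.

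The crucial arithmetic step is that $|I_\ell|\leq\lfloor\log_2 n\rfloor+2$: for distinct indices $k<k'$ in $I_\ell$ one has $(k'+2^{k'})-(k+2^k)\leq n$, so in particular $2^k\leq n-1$ and $k\leq\lfloor\log_2 n\rfloor$; iterating, every element of $I_\ell$ except possibly its maximum lies in $\{0,\ldots,\lfloor\log_2 n\rfloor\}$, and the same gap bound then places the maximum in $\{0,\ldots,\lfloor\log_2 n\rfloor+1\}$. Thus any admissible $T(w)$ is a translate of some subset of $\{k+2^k:0\leq k\leq\lfloor\log_2 n\rfloor+1\}$, so a crude count gives at most $2^{\lfloor\log_2 n\rfloor+2}\cdot(n+1)=O(n^2)$ possibilities. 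Words of length at most $\log_2 n$ number at most $|\Sigma|^{\lfloor\log_2 n\rfloor+1}=n^{O(1)}$ and contribute at most that many further classes (handling case (ii) by brute force). Altogether $\ind\Theta_n(\PAD(L))=n^{O(1)}$, and $\eta(\PAD(L))=\limsup_n\log_2(n^{O(1)})/n=0$.

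The main obstacle is the structural observation that $\Pos_w^{(n)}$ decomposes for long $w$ into \emph{entire} layers $\Sigma^{k+2^k-\ell}$ rather than arbitrary subsets thereof, since this is what compresses each equivalence class to the small combinatorial datum $T(w)$; after this the arithmetic bound on $|I_\ell|$ and the enumeration of translates are routine.
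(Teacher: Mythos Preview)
Your proof is correct and shares the paper's strategy: split off the short words, show that for long $w$ the set $\Pos_w^{(n)}$ is a union of \emph{full} layers $\Sigma^m$, and count the resulting patterns. Where the two diverge is that the paper asserts that for $|u|\ge\log_2 n$ the set $\Pos_u^{(n)}$ is empty or a \emph{single} full layer, on the grounds that consecutive lengths $k+2^k$ differ by more than $n$ once $k\ge\log_2 n$. That step is not quite right as stated: the relevant values of $k$ need not all satisfy $k\ge\log_2 n$. For instance with $L=\Sigma^*$, $n=10$, and any $u$ of length $4\ge\log_2 10$, one gets $\Pos_u^{(10)}=\Sigma^{2}\cup\Sigma^{7}$. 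You sidestep this by allowing several layers and bounding their number via $|I_\ell|\le\lfloor\log_2 n\rfloor+2$, then counting translates; this yields $O(n^2)$ classes for long words instead of the paper's $n+2$, but of course the entropy conclusion is unaffected. So your argument is essentially the paper's, with this small oversight repaired.
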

\begin{proof}
Note that every word in $\PAD(L)$ has a length of the form $2^{k}+k$. 
Consider the set $\{\Pos^{(n)}_u\mid |u|\geq \log_2n\}$. We will show that every $\Pos^{(n)}_u$ is either the empty set or $\Sigma^k$ for some $k\in\{0,1\dots,n\}$. Let $u\in\Sigma^*$ with $|u|\geq \log_2n$.
For every $k\geq\log_2n $ we have: 
\begin{align*}
(2^{{k+1}}+(k+1))-({2^k}+k)&={2\cdot2^k}-{2^{k}}+1={2^k}+1\geq n+1. 
\end{align*}
Hence the lengths of words in $\PAD(L)$ are so far apart that $\Pos^{(n)}_u\subseteq \Sigma^k$ for some $k\in\{0,1\dots,n\}$. If $\Pos^{(n)}_u$ is not empty, then there is some $w\in \Pos^{(n)}_u$ and $uw\in\PAD(L)$.
We know that $uw$ is of the form $u'v'$ for some $u'\in L$ and $v'\in\Sigma^*$ with $|v'|={2^{|u'|}}$. 
Since $2^{|u|}\geq n\geq |w|$ we have that $w$ is a postfix of $v'$.

By definition $u'v''\in\PAD(L)$ for all $v''$ with $|v''|=2^{{|u'|}}$. As a consequence, $uv''\in\PAD(L)$ for all $v''\in\Sigma^k$ and therefore $\Pos^{(n)}_u=\Sigma^k$.
Hence we can bound the number of classes in $\Theta_n(\PAD(L))$ by
\begin{align*}
\ind\Theta_n(\PAD(L))&\leq \left|\{\Pos^{(n)}_u\mid |u|<\log_2n\}\right|+\left|\{\Pos^{(n)}_u\mid |u|\geq \log_2n\}\right|\\
&\leq \left|\Sigma^{(\log_2n)}\right|+n+2.
\end{align*}
Now we can determine the entropy
\begin{align*}
\eta(\PAD(L))\leq \limsup_{n\to\infty}\frac{\log_2(|\Sigma^{\log_2n}|+n+2)}{n}
=0.
\end{align*}
This finishes the proof.
\end{proof}
Note that this works for any language $L$, even if it is undecidable, and since $L$ can be reconstructed from $\PAD(L)$ we know that $\PAD(L)$ is also undecidable.
\begin{corollary}
There are undecidable languages with zero entropy. 
\end{corollary}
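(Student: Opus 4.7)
The plan is to apply the preceding theorem to an undecidable language and then verify that padding preserves undecidability. Concretely, I would start by picking any undecidable language $L$ over a binary (or larger) alphabet $\Sigma$; the halting problem, suitably encoded over $\{0,1\}$, is the canonical choice. By the preceding theorem applied to this $L$, we immediately get $\eta(\PAD(L)) = 0$, so the only remaining task is to show that $\PAD(L)$ is itself undecidable.

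For the undecidability step, I would argue by reduction: given an input word $u \in \Sigma^*$, one can effectively produce some $v \in \Sigma^{2^{|u|}}$ (say $v = 0^{2^{|u|}}$), and then $u \in L$ if and only if $uv \in \PAD(L)$. This is a computable many-one reduction from $L$ to $\PAD(L)$, so if $\PAD(L)$ were decidable then $L$ would be as well, contradicting the choice of $L$. Hence $\PAD(L)$ is undecidable and has entropy zero, proving the corollary.

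The main step is almost entirely contained in the theorem already proved, so there is no real obstacle here — the only subtlety worth flagging is the need to fix a specific padding $v$ so that the reduction is computable, which rules out any concern about $\PAD$ not being a reduction in the usual sense.
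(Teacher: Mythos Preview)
Your proposal is correct and follows essentially the same approach as the paper: apply the preceding theorem to an undecidable $L$ and then argue that $\PAD(L)$ remains undecidable because $L$ is recoverable from it. The paper merely remarks that ``$L$ can be reconstructed from $\PAD(L)$'' without spelling out the reduction, whereas you give the explicit many-one reduction $u \mapsto u\cdot 0^{2^{|u|}}$ (relying implicitly on the injectivity of $k \mapsto k + 2^k$ to ensure the decomposition is unique); this is a welcome bit of extra precision but not a genuinely different argument.
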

This is a rather strange result, because the barrier of 
undecidability cannot be breached in classical complexity theory. Undecidable languages are always complicated. 

We will use this result to show that not only is the entropy function surjective, there are even  uncountably many languages for every entropy.
\begin{corollary}\label{cor:uncountZero}
Over an at least two element alphabet $\Sigma$ there are uncountably many languages with zero entropy.
\end{corollary}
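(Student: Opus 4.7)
The plan is to exploit the padding construction from the preceding theorem. Specifically, I would consider the map
\[\Phi\colon \mathcal{P}(\Sigma^*)\to \mathcal{P}(\Sigma^*),\qquad L\mapsto \PAD(L).\]
By the previous theorem, every language in the image of $\Phi$ has zero entropy, so it suffices to show that $\Phi$ has uncountable image.

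First I would argue that $\Phi$ is injective. Given $\PAD(L)$, one recovers $L$ by the following rule: a word $u\in\Sigma^*$ lies in $L$ if and only if there exists some (equivalently, every) word $v\in\Sigma^{2^{|u|}}$ with $uv\in \PAD(L)$. This reconstruction works because the lengths $\{2^k+k\mid k\in\N\}$ are pairwise distinct and the definition of $\PAD$ pads every $u\in L$ by every possible tail of length $2^{|u|}$. Hence $\Phi(L_1)=\Phi(L_2)$ forces $L_1=L_2$.

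Next, since $|\Sigma|\geq 2$, the set $\Sigma^*$ is countably infinite, so $\mathcal{P}(\Sigma^*)$ has cardinality $2^{\aleph_0}$, which is uncountable. By injectivity of $\Phi$, the image $\Phi(\mathcal{P}(\Sigma^*))$ is also uncountable. Every element of this image is a language with zero entropy, completing the proof.

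There is no serious obstacle here; the only thing to verify carefully is the reconstruction of $L$ from $\PAD(L)$, which follows directly from the definition of $\PAD$.
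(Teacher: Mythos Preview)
Your proposal is correct and matches the paper's own proof essentially line for line: the paper also argues that $\PAD$ is injective (since $L$ can be reconstructed from $\PAD(L)$), that its image consists of zero-entropy languages by the preceding theorem, and that $\mathcal{P}(\Sigma^*)$ is uncountable. You simply spell out the reconstruction step in slightly more detail than the paper does.
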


\begin{proof}
There are uncountably many languages over $\Sigma$ and as mentioned before $\PAD$ is injective and every language in its image has zero entropy.
\end{proof}

The following lemma will help us to construct uncountably many languages for every entropy, not just for zero.

\begin{lemma}\label{lem:entropyOfCombLangIsMax}
Let $L_1,L_2\subseteq \Sigma^*$ be nonempty languages and $\texttt{\#}$ a new symbol not in $\Sigma$. Then 
\[\eta(L_1\texttt{\#}L_2)=\max\{\eta(L_1),\eta(L_2)\}.\]
\end{lemma}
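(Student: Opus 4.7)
The plan is to work directly with the Myhill--Nerode approximations $\Theta_n$ of $L:=L_1\texttt{\#}L_2$ over the extended alphabet $\Sigma':=\Sigma\cup\{\texttt{\#}\}$, and to establish the sandwich
\[\max\{\ind\Theta_{n-1-m}(L_1),\ \ind\Theta_n(L_2)\}\ \leq\ \ind\Theta_n(L)\ \leq\ \ind\Theta_{n-1-m}(L_1)+\ind\Theta_n(L_2)+1,\]
where $m:=\min\{|v|:v\in L_2\}$ (finite since $L_2$ is nonempty). Once both inequalities are in hand, passing to the $\limsup$ in $n$ delivers the claim. I fix once and for all a reference pair $u_0\in L_1$ and $v_0\in L_2$ with $|v_0|=m$.

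For the lower bounds I would exhibit two class injections. First, any $w\in\Sigma^{(n)}$ witnessing $(v,v')\notin\Theta_n(L_2)$ also witnesses $(u_0\texttt{\#}v,u_0\texttt{\#}v')\notin\Theta_n(L)$, since $u_0\in L_1$ and the witness cannot introduce a second $\texttt{\#}$; this gives $\ind\Theta_n(L_2)\leq\ind\Theta_n(L)$. Second, if $w\in\Sigma^{(n)}$ witnesses $(u,u')\notin\Theta_n(L_1)$ for $u,u'\in\Sigma^*$, then the padded string $w\texttt{\#}v_0\in(\Sigma')^{(n+1+m)}$ witnesses $(u,u')\notin\Theta_{n+1+m}(L)$, and reindexing yields $\ind\Theta_{n-1-m}(L_1)\leq\ind\Theta_n(L)$.

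For the upper bound I would partition the words over $\Sigma'$ by their $\texttt{\#}$-count. Words containing two or more $\texttt{\#}$'s have empty positive-witness sets and all collapse into a single dead class. A word $u\texttt{\#}v$ with $u,v\in\Sigma^*$ and $u\notin L_1$ is likewise dead; otherwise its positive $(\Sigma')^{(n)}$-witnesses must be $\texttt{\#}$-free and coincide with $\{w\in\Sigma^{(n)}:vw\in L_2\}$, so the $\Theta_n(L_2)$-class of $v$ controls its $\Theta_n(L)$-class. Finally, for $u\in\Sigma^*$ every positive $(\Sigma')^{(n)}$-witness has the shape $w_1\texttt{\#}v$ with $v\in L_2$ and $|w_1|+1+|v|\leq n$; since $|v|\geq m$ we get $|w_1|\leq n-1-m$, so the $\Theta_{n-1-m}(L_1)$-class of $u$ already controls its class. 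Summing the three group contributions produces the stated additive upper bound.

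To conclude, I would take $\log_2$, divide by $n$, and pass to $\limsup_{n\to\infty}$. Three standard observations do the work: (i) $\tfrac{n-1-m}{n}\to 1$, so the constant shift $1+m$ in the $L_1$ index is invisible to the $\limsup$; (ii) $\log_2(a+b+1)\leq 2+\max\{\log_2 a,\log_2 b\}$ for $a,b\geq 1$; and (iii) $\limsup_n\max(b_n,c_n)=\max(\limsup_n b_n,\limsup_n c_n)$ for nonnegative sequences. Both the upper and lower bounds then collapse to $\eta(L)=\max\{\eta(L_1),\eta(L_2)\}$. The only potentially delicate step is the shift bookkeeping for the $L_1$ contribution; everything else is routine class-counting that exploits the prefix-code structure forced by the separating symbol $\texttt{\#}$.
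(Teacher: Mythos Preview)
Your proof is correct and follows essentially the same route as the paper: both decompose words over $\Sigma\cup\{\texttt{\#}\}$ by their $\texttt{\#}$-count, identify the dead class, relate the one-$\texttt{\#}$ words to $\Theta_n(L_2)$-classes and the zero-$\texttt{\#}$ words to $\Theta_{n-1-m}(L_1)$-classes (using a minimal-length $v_0\in L_2$ for the shift), and then pass to the $\limsup$. The only cosmetic difference is that the paper packages the count as a single near-bijection yielding $\ind\Theta_n(L)=\ind\Theta_{n-1-m}(L_1)+\ind\Theta_n(L_2)+k$ with $k\in\{0,1\}$, whereas you prove the upper and lower bounds separately; your version is arguably cleaner, since it avoids the well-definedness issues the paper leaves as an exercise.
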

\begin{proof}
Let $v'\in L_2$ with $|v'|$ minimal. We show that for $n\geq|v'|$ 
\[\ind\Theta_n(L_1\texttt{\#}L_2)= \ind\Theta_{n-|v'|-1}(L_1)+\ind\Theta_{n}(L_2)+k,\]
for $k$ either 1 or 0. Consider the map
\begin{align*}
(\Sigma^*\texttt{\#}\Sigma^*)/\Theta_n(L_1\texttt{\#}L_2)&\to \Sigma^*/\Theta_{n-|v'|-1}(L_1)\sqcup \Sigma^*/\Theta_n(L_2)\\
[u]^n &\mapsto [u]^{n-|v'|-1}_1\\
[u\texttt{\#}v]^n &\mapsto [v]^n_2,
\end{align*}
where the sets in the image are potentially renamed to make the classes for $L_1$ and $L_2$ disjoint. It is left as an exercise to the reader to show that this map is well defined. It is clearly surjective and since $v'$ is of minimal length and $[u\texttt{\#}v]^n=[u'\texttt{\#}v]^n$ also injective. Note that any word in $(\Sigma\cup\{\texttt{\#}\})^*\setminus(\Sigma^*\texttt{\#}\Sigma^*)$ has no positive witnesses and is in the same class as $\texttt{\#}\texttt{\#}$. This shows the above equality. Therefore 
\begin{align*}
\eta(L_1\texttt{\#}L_2)&=\limsup_{n\to\infty}\frac{\ind\Theta_{n-|v'|-1}(L_1)+\ind\Theta_{n}(L_2)}{n}\\
&=\max\left(\limsup_{n\to\infty}\frac{\ind\Theta_{n}(L_1)}{n},\limsup_{n\to\infty}\frac{\ind\Theta_{n}(L_2)}{n}\right)=\max\{\eta(L_1),\eta(L_2)\}
\end{align*}
as desired.






\end{proof}

We can use this to strengthen Theorem \ref{the:entropyIsSur}.

\begin{corollary}
For any alphabet $\Sigma$ with at least two letters and any $x\in[0,\infty]$ the set
\[\{L\subseteq \Sigma^*\mid \eta(L)=x\}\]
is uncountable.
\end{corollary}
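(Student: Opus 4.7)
The plan is to combine a fixed language of entropy $x$ with uncountably many languages of entropy zero, obtaining uncountably many languages of entropy $x$. The case $x=0$ is already Corollary \ref{cor:uncountZero}, so I may assume $x\in(0,\infty]$. By Theorem \ref{the:entropyIsSur} there is a nonempty $L_x \subseteq \Sigma^*$ with $\eta(L_x)=x$, and by Corollary \ref{cor:uncountZero} there is an uncountable family $\{Z_i\}_{i\in I}$ of nonempty languages over $\Sigma$ with $\eta(Z_i)=0$ (non-emptiness comes for free from the $\PAD$ construction used in that corollary). The challenge is to combine $L_x$ with each $Z_i$ so that the result is a language over $\Sigma$ of entropy exactly $x$, and so that different indices give different languages.

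If $|\Sigma|\ge 3$, I would pick any $\texttt{\#}\in\Sigma$, set $\Sigma'=\Sigma\setminus\{\texttt{\#}\}$, and reapply Theorem \ref{the:entropyIsSur} and Corollary \ref{cor:uncountZero} to the alphabet $\Sigma'$ (which still has at least two letters) to obtain $L_x$ and $\{Z_i\}_{i\in I}$ inside $(\Sigma')^*$. Then Lemma \ref{lem:entropyOfCombLangIsMax} gives $\eta(L_x\,\texttt{\#}\,Z_i)=\max\{x,0\}=x$, and since $L_x\,\texttt{\#}\,Z_i\subseteq\Sigma^*$ and the unique occurrence of $\texttt{\#}$ in each word lets one recover $Z_i$ from $L_x\,\texttt{\#}\,Z_i$, distinct indices produce distinct languages. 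This yields an uncountable family over $\Sigma$ of entropy $x$.

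If $|\Sigma|=2$, the previous argument fails because $\Sigma\setminus\{\texttt{\#}\}$ would be unary, putting it outside the scope of Theorem \ref{the:entropyIsSur}. I would reduce to the case just handled: apply the $|\Sigma|\ge 3$ argument to $\Delta=\Sigma\cup\{\texttt{\#}\}$ to get an uncountable family $\{M_i\}_{i\in I}$ of languages over $\Delta$ with $\eta(M_i)=2x$. Fix an efficient encoding $\enc\colon\Delta\to\Sigma^2$ (so $k=\lceil\log_2 3\rceil=2$). By Corollary \ref{cor:encBinary}, $\eta(\enc(M_i))=2x/2=x$, and because $\enc$ is a prefix code, $M\mapsto\enc(M)$ is injective on languages. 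Hence $\{\enc(M_i)\}_{i\in I}$ is an uncountable family of languages over $\Sigma$ with entropy $x$. For $x=\infty$ the same chain of reasoning goes through with $2\cdot\infty=\infty$ and $\infty/2=\infty$.

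The main obstacle is purely organizational: making sure the combined language lives in $\Sigma^*$ rather than in a strict superalphabet. For $|\Sigma|\ge 3$ an unused letter of $\Sigma$ plays the separator role directly; for $|\Sigma|=2$ one sacrifices a factor of two in entropy to buy a third letter, then spends it back via the efficient binary encoding. No really delicate estimates are needed beyond those already supplied in Theorem \ref{the:entropyIsSur}, Corollary \ref{cor:uncountZero}, Lemma \ref{lem:entropyOfCombLangIsMax}, and Corollary \ref{cor:encBinary}.
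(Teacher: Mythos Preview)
Your proof is correct and follows essentially the same approach as the paper: combine a language of the target entropy with uncountably many zero-entropy languages via Lemma \ref{lem:entropyOfCombLangIsMax}, then use Corollary \ref{cor:encBinary} to bring the result back over a two-letter alphabet. The only difference is organizational---the paper reduces immediately to $|\Sigma|=2$ (implicitly viewing larger alphabets as containing a two-letter subalphabet), while you split into the cases $|\Sigma|\ge 3$ (handled directly without encoding) and $|\Sigma|=2$; both routes use the same ingredients in the same way.
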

\begin{proof}
Assume that $\Sigma$ has just two elements. By Corollary \ref{cor:uncountZero} the claim holds for $x=0$. Let $x\in (0,\infty]$. By Theorem \ref{the:entropyIsSur} there exists a language $L$ with entropy $2x$. Now, by Lemma \ref{lem:entropyOfCombLangIsMax}, every language in the uncountable set $\{L_0\texttt\#L\mid \eta(L_0)=0\}$ has entropy $2x$. Note that we have introduced a new symbol, hence we encode every language from the set over $\Sigma$ and obtain by \ref{cor:encBinary} an uncountable set of languages with entropy $x$.

\end{proof}



\textbf{}\section{Topological Entropy and Automata}\label{cha:automata}
Now we come to the second part of this article. Here we introduce $k$-stack push-down automaton ($k$-stack PDA) and $k$-counter automata. We show that the entropy of a language that is accepted by a $k$-stack PDA can be bounded in terms of the number of stack symbols. As a corollary we obtain that all languages recognized by $k$-counter automata have zero entropy, proving an open conjecture from Schneider and Borchmann \cite{schneiderborchmann}.

Before we start with the formal definition of $k$-stack PDAs, we fix some functions and conventions. We will write $\boldsymbol{v}$ for the tuple $(v_1,\dots,v_k)$ from the set  $\Gamma^*_1\times\dots\times\Gamma^*_k$ and $\boldsymbol{\varepsilon}$ for the tuple containing only $\varepsilon$ in each entry.
For a word $u=a_1\dots a_n\in\Sigma^n$ and $I\subseteq\N$ let $\pi_I(u)$ be the \define{projection of $u$ onto $I$}:
\[\pi_I(u)=a_{i_1}\dots a_{i_k} \text{ where $i_1<\dots <i_k$ and $\{i_1,\dots,i_k\}=I\cap\{1,\dots,n\}$}.
\]

Let $\pi_n=\pi_{\{1,\dots,n\}}$, $\head=\pi_{\{1\}}$, and $\tail=\pi_{\{2,3,\dots\}}$. Dually we define the functions $\bottom(u)=\pi_{\{|u|\}}(u)$ and $\front(u)=\pi_{|u|-1}(u)$.
For two words $u,v\in\Sigma^*$ where $v$ is a postfix of $u$ define $u-v$ as $u_1$, where $u=u_1v$. For tuples $\boldsymbol{v}$ the functions $\pi_I$, $\head$, $\tail$, and $-$ are applied componentwise. 
	
\begin{definition}
A $k$-stack PDA is a tuple $\mathcal{A}=(Q,\Sigma,\Gamma_1,\dots,\Gamma_k, \delta,q_0,F)$ where
\begin{itemize}
\item $Q$ is a finite set (of \define{states}),
\item $\Sigma$ is an alphabet (the \define{input alphabet}),
\item $\Gamma_1,\dots,\Gamma_k$ are alphabets (the \define{stack alphabets}), 
\item $\delta\subseteq \Gamma^{(1)}_1\times\dots\times\Gamma^{(1)}_k\times Q\times (\Sigma\cup\{\varepsilon\})\times Q\times \Gamma^*_1\times\dots\times\Gamma^*_k$ (the \define{transition relation}),
\item $q_0$ is from $Q$ (the \define{initial state}), and
\item $F$ is a subset of $Q$ (the \define{set of final states}).
\end{itemize}

A configuration of $\mathcal{A}$ is a tuple $(q,\boldsymbol{v},w)$ with the current state $q$, the values stored in the stacks $\boldsymbol{v}$, and the remaining input $w$.
For $a\in\Sigma^{(1)}$ we can make the transition
\[(q,\boldsymbol{v},aw)\vdash(p,\boldsymbol{u},w)\text{ if }
(\head(\boldsymbol{v}),q,a,p,\boldsymbol{u}-\tail(\boldsymbol{v}))\in\delta.\] 
Beware that a transition of the form $(\boldsymbol \varepsilon,p,a,q,\boldsymbol u)$ can only be used if all stacks are empty. Also the symbol on top of a stack represented by $v$ is the leftmost symbol of $v$. Hence the stacks grow to the left. 
The language accepted by $\mathcal{A}$ is defined as
\[L(\mathcal{A})=\{w\in\Sigma^*\mid \exists p\in F.\  (q_0,\boldsymbol\varepsilon,w)\vdash^*(p,\boldsymbol v,\varepsilon)\}.\]
We call $\mathcal{A}$ \define{deterministic} if for every configuration $K$ there is at most one configuration $K'$ such that $K\vdash K'$, \define{total} if for every configuration $K$ there is at least one configuration $K'$ such that $K\vdash K'$, and \define{$\varepsilon$-free} if there is no transition of the form $(\boldsymbol{v},q,\varepsilon,p,\boldsymbol{u})$ in $\delta$.
\end{definition}

In the following we will always assume that our automata are total.
It is clear that a 0-stack PDA is just an ordinary finite automaton.
We define a \define{$k$-counter automaton} to be a $k$-stack PDA where all stack alphabets are unary. The height of a stack can be seen as the value of a counter. Note that a counter automaton can only test whether a counter is zero or not.
While counter automata do not seem to be that powerful, we have the following surprising result. 
\begin{theorem}[Theorem 7.9 from \cite{hopcrofullman}]\label{the:2counterautomataTuringcomplete}
	A deterministic $2$-counter automaton can simulate an arbitrary Turing machine.
\end{theorem}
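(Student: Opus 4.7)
The plan is to prove the theorem via the classical Minsky-style chain of simulations, composing three reductions: Turing machine $\to$ 2-stack PDA $\to$ $n$-counter automaton $\to$ 2-counter automaton. Each reduction preserves determinism, so the composition yields a deterministic 2-counter automaton simulating an arbitrary deterministic Turing machine.

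First I would simulate a Turing machine $M$ by a deterministic 2-stack PDA. One stack holds, from top to bottom, the tape contents strictly to the left of the head; the other holds the cell under the head together with the cells to its right. Blanks are produced on demand whenever a stack is found empty. Writing and moving the head then amount to local rewrites of the two stack tops, so the simulation proceeds step-for-step.

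Second, I would simulate the 2-stack PDA by a deterministic $n$-counter automaton for some constant $n$ depending on the stack alphabets. For each stack with alphabet $\Gamma_i$ of size $m$, encode the stack content as a natural number $v$ written in base $m+1$, where the digits $1,\dots,m$ stand for the stack symbols and $v=0$ marks the empty stack. Pushing a symbol with code $b$ replaces $v$ by $v(m+1)+b$, popping replaces $v$ by $\lfloor v/(m+1)\rfloor$, and reading the top means computing $v \bmod (m+1)$. Each such arithmetic operation is implemented using one or two auxiliary scratch counters by transferring $v$ back and forth one unit at a time.

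Finally, I would reduce the $n$-counter automaton to a deterministic 2-counter automaton via Gödel numbering: represent the tuple $(c_1,\dots,c_n)$ by the single integer $N=p_1^{c_1}\cdots p_n^{c_n}$, store it in one counter, and use the second counter as scratch space. Then incrementing $c_i$ amounts to multiplying $N$ by $p_i$, decrementing amounts to dividing by $p_i$, and testing $c_i=0$ amounts to testing that $p_i \nmid N$. Each such step is carried out by draining $N$ unit by unit into the scratch counter, copying or discarding the appropriate number of units per iteration, and then moving the result back. The hard part is this third step: multiplication, division, and divisibility testing are not primitives available to a counter automaton and must each be implemented as loops built from increment, decrement, and zero-test, while keeping branching deterministic throughout. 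Once these primitives are in place, the three simulations compose to yield the theorem.
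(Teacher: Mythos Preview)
Your proposal is correct and is precisely the classical Minsky argument that Hopcroft and Ullman present in the cited reference. The paper itself does not prove this theorem at all; it merely quotes it as Theorem~7.9 from \cite{hopcrofullman} and uses it as a black box to justify why the analysis is restricted to deterministic $\varepsilon$-free automata. So there is nothing in the paper to compare against, but your three-step chain (Turing machine $\to$ 2-stack PDA $\to$ multi-counter automaton via base-$(m{+}1)$ encoding $\to$ 2-counter automaton via prime-power G\"odel numbering) is exactly the textbook route, and your identification of the third step as the delicate one is accurate. One small remark in the context of this paper: the simulation relies essentially on $\varepsilon$-transitions for the arithmetic loops, which is consistent with how the result is invoked here (it is the presence of $\varepsilon$-transitions that makes 2-counter automata Turing complete, cf.\ Table~\ref{tab:upperBoundsUpdated}).
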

Because of this and since topological automata are inherently deterministic we will restrict ourselfs to deterministic $\varepsilon$-free $k$-stack PDAs for now. 

Our next goal is to obtain an upper bound on the entropy of such an automaton. The idea is the following: First, translate the PDA $\mathcal{A}$ into a topological automaton $\Top{A}$. Second, we observe that witnesses of length $n$ can only see the leftmost $n$ symbols on each stack. Third, we use this observation to find an upper bound for $\ind\Lambda_n(\Top A)$ and therefore also for $\eta(L(\mathcal A))$.

Fix a deterministic $\varepsilon$-free $k$-stack PDA $\mathcal{A}=(Q,\Sigma,\Gamma_1,\dots,\Gamma_k, \delta,q_0,F)$.
A naive choice for the states of $\Top{A}$ would be $Q\times\Gamma^*_1\times\dots\times\Gamma^*_k$. 
But recall that the states of a topological automaton are a compact Hausdorff space. Therefore, we adjust this idea a little bit and define for an alphabet $\Gamma$ the set $\Gamma^\infty=\Gamma^*\cup\Gamma^\N$, where $\Gamma^\N$ denotes the set of all (right) infinite words over $\Gamma$. We equip $\Gamma^\infty$ with the topology defined by the following basis of open sets
\[\{\{u\}\mid u\in\Gamma^*\}\cup \{u\Gamma^\infty\mid u\in\Gamma^*\},\]
i.e., the open sets are exactly the sets that can be expressed as unions of sets from the basis. The space $\Gamma^\infty$ is compact. 
\begin{definition}
The topological automaton $\Top{A}$ is defined as \[(Q\times\Gamma^\infty_1\times\dots\times\Gamma^\infty_k,\Sigma,\alpha,(q_0,\boldsymbol\varepsilon),F\times\Gamma^\infty_1\times\dots\times\Gamma^\infty_k),\] where $\alpha((q,\boldsymbol{v}),w)=(p,\boldsymbol{u})$ if $(q,\boldsymbol{v},w)\vdash^*(p,\boldsymbol{u},\varepsilon)$. Here $\vdash^*$ is extended to infinite words in the obvious way. We equip $Q\times\Gamma^\infty_1\times\dots\times\Gamma^\infty_k$ with the product topology, where the topology on $\Gamma_i^\infty$ is the one defined above and the topology on $Q$ is the discrete topology.
\end{definition}
Clearly, we have that 
\[L(\Top A) = L(\mathcal{A}).\]

Now comes the second step of the proof.
\begin{lemma}\label{lem:PDALambda}
For two states $(q,\boldsymbol u)$ and $(p,\boldsymbol v)$ of $\Top{A}$. If $q=p$ and $\pi_n(\boldsymbol u)=\pi_n(\boldsymbol v)$, then $((q,\boldsymbol u),(p,\boldsymbol v))\in\Lambda_n(\Top A)$.
\end{lemma}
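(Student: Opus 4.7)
The plan is to prove the lemma by induction on $n$. The key observation is that because $\mathcal{A}$ is deterministic, total, and $\varepsilon$-free, each input symbol triggers exactly one transition, and that transition only inspects the top symbol of each stack. Hence consuming a word of length at most $n$ can only interrogate and possibly remove the topmost $n$ symbols of each stack, which by hypothesis agree on the two configurations.

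For the base case $n=0$, the only word in $\Sigma^{(0)}$ is $\varepsilon$, so $\alpha$ leaves the state unchanged and membership in $F\times\Gamma^\infty_1\times\dots\times\Gamma^\infty_k$ depends solely on the shared control state $q$.

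For the inductive step, assume the claim holds for $n-1$ and that $\pi_n(\boldsymbol u)=\pi_n(\boldsymbol v)$. Given $aw'\in\Sigma^{(n)}$ with $a\in\Sigma$, the hypothesis $n\geq1$ gives $\head(\boldsymbol u)=\head(\boldsymbol v)$. Since the control states coincide, determinism forces the same transition $t=(\head(\boldsymbol u),q,a,q',\boldsymbol s-\tail(\boldsymbol u))=(\head(\boldsymbol v),q,a,q',\boldsymbol s-\tail(\boldsymbol v))\in\delta$ to fire from both configurations, yielding the new states $(q',\boldsymbol u')$ and $(q',\boldsymbol v')$ with $\boldsymbol u'=\boldsymbol s\cdot\tail(\boldsymbol u)$ and $\boldsymbol v'=\boldsymbol s\cdot\tail(\boldsymbol v)$ for a common pushed tuple $\boldsymbol s$. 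From $\pi_n(\boldsymbol u)=\pi_n(\boldsymbol v)$ we obtain $\pi_{n-1}(\tail(\boldsymbol u))=\pi_{n-1}(\tail(\boldsymbol v))$, and prefixing by the common $\boldsymbol s$ preserves the agreement on the first $n-1$ symbols of each stack. Thus $\pi_{n-1}(\boldsymbol u')=\pi_{n-1}(\boldsymbol v')$, and the induction hypothesis applied to $w'$ of length at most $n-1$ closes the case.

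The main delicate point is the bookkeeping showing $\pi_{n-1}(\boldsymbol s\cdot\tail(\boldsymbol u))=\pi_{n-1}(\boldsymbol s\cdot\tail(\boldsymbol v))$ component by component: if a component of $\boldsymbol s$ already has length at least $n-1$, then the first $n-1$ symbols there are entirely determined by $\boldsymbol s$ and the tails are irrelevant; otherwise the remaining $n-1-|\boldsymbol s|$ positions come from $\tail(\boldsymbol u)$ and $\tail(\boldsymbol v)$, which agree in those positions by the inductive data. A further small subtlety is that some components may be empty, in which case $\head$ returns $\varepsilon$; however $\pi_n(\boldsymbol u)=\pi_n(\boldsymbol v)$ forces both stacks to be empty or non-empty simultaneously in each component, so the side condition that a transition of the form $(\boldsymbol\varepsilon,q,a,q',\boldsymbol u)$ only fires when all stacks are empty is respected identically on both sides.
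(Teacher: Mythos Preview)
Your proof is correct and follows essentially the same induction on $n$ as the paper's own argument, with the same key step that $\head(\boldsymbol u)=\head(\boldsymbol v)$ and $q=p$ force the identical transition to fire, yielding successor stacks that agree on their first $n-1$ symbols. You supply more detail than the paper on the $\pi_{n-1}$ bookkeeping and the empty-stack corner case; the only cosmetic issues are a small notational slip in writing the transition tuple (the last component in $\delta$ should simply be the common pushed tuple $\boldsymbol s$, not $\boldsymbol s-\tail(\boldsymbol u)$) and the omission of the trivial $w=\varepsilon$ case in the inductive step.
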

\begin{proof}
Let $F'=F\times\Gamma^\infty_1\times\dots\times\Gamma^\infty_k$.
Note that the lemma is equivalent to showing that for any $w\in\Sigma^{(n)}$ we have that $\alpha((q,\boldsymbol u),w)\in F'$ iff  $\alpha((p,\boldsymbol v),w)\in F'$.

The proof is by induction on $n$.

If $n=0$. Then $w=\varepsilon$ and the statement is trivially true.

If $n>0$. Then either $w=\varepsilon$ and the statement is again trivial or $w=aw'$ for some $a\in\Sigma$ and $w'\in\Sigma^{(n-1)}$. Consider $(q',\boldsymbol u')=\alpha((q,\boldsymbol u),a)$ and $(p',\boldsymbol v')=\alpha((p,\boldsymbol v),a)$. Since $n>0$ we have that $\head(\boldsymbol{u})=\head(\boldsymbol{v})$. Furthermore, $p=q$ and therefore both states have to use the same transition. Hence $q'=p'$ and $\pi_{n-1}(\boldsymbol{u}')=\pi_{n-1}(\boldsymbol{v}')$. Applying the induction hypothesis yields
\[\alpha((q,\boldsymbol u),w)=\alpha((q',\boldsymbol u'),w')\in F' \text{ iff }  \alpha((p,\boldsymbol v),w)=\alpha((p',\boldsymbol v'),w')\in F',\]
as desired.
\end{proof}

\begin{theorem}\label{the:EntropyPDA}
Let $\mathcal{A}=(Q,\Sigma,\Gamma_1,\dots,\Gamma_k, \delta,q_0,F)$ be a deterministic $\varepsilon$-free $k$-stack PDA. Then 
\[\eta(L(\mathcal{A}))\leq \log_2|\Gamma_1|+\dots +\log_2|\Gamma_k|.\]
\end{theorem}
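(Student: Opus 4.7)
The plan is to compose the construction of the topological automaton $\Top{A}$ with Lemma \ref{lem:PDALambda} and a straightforward counting argument. Since $L(\Top A)=L(\mathcal A)$, the theorem of Schneider and Borchmann quoted earlier gives $\eta(L(\mathcal A))\leq\eta(\Top A)$, so it is enough to bound $\eta(\Top A)$ from above.

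By Lemma \ref{lem:PDALambda}, the $\Lambda_n(\Top A)$-class of any state $(q,\boldsymbol u)\in Q\times\Gamma_1^\infty\times\dots\times\Gamma_k^\infty$ is determined by the pair $(q,\pi_n(\boldsymbol u))$. Each component $\pi_n(u_i)$ is a word over $\Gamma_i$ of length at most $n$, so it takes at most $|\Gamma_i^{(n)}|=\sum_{j=0}^{n}|\Gamma_i|^{j}\leq (n+1)\cdot|\Gamma_i|^{n}$ values. Combining over the finite state set and the $k$ stacks therefore yields
\[\ind\Lambda_n(\Top A)\;\leq\; |Q|\cdot\prod_{i=1}^{k}(n+1)\cdot|\Gamma_i|^{n} \;=\; |Q|\cdot(n+1)^{k}\cdot\prod_{i=1}^{k}|\Gamma_i|^{n}.\]

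Taking $\log_2$, dividing by $n$, and sending $n\to\infty$, the contributions $\log_2|Q|/n$ and $k\log_2(n+1)/n$ both vanish, leaving
\[\eta(\Top A)=\limsup_{n\to\infty}\frac{\log_2(\ind\Lambda_n(\Top A))}{n}\;\leq\;\sum_{i=1}^{k}\log_2|\Gamma_i|,\]
which together with $\eta(L(\mathcal A))\leq\eta(\Top A)$ gives the theorem. The conceptual work — passing to the compactification $\Gamma_i^\infty$ and extracting the determinism-driven stability lemma — has already been done in the preceding lemma, so there is no real obstacle remaining; the argument is essentially a counting and asymptotic estimate. The one place where mild care is needed is the unary case $|\Gamma_i|=1$ (i.e.\ counter stacks), where $\log_2|\Gamma_i|=0$ contributes nothing and the resulting polynomial factor $(n+1)^{k}$ is absorbed by the $\limsup$, immediately recovering the corollary that every deterministic $\varepsilon$-free $k$-counter automaton accepts a zero-entropy language.
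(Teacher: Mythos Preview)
Your proof is correct and follows essentially the same route as the paper: invoke Lemma~\ref{lem:PDALambda} to bound $\ind\Lambda_n(\Top A)$ by $|Q|\cdot\prod_i|\Gamma_i^{(n)}|$, then take the $\limsup$ and use $\eta(L(\mathcal A))\leq\eta(\Top A)$. Your explicit estimate $|\Gamma_i^{(n)}|\leq(n+1)|\Gamma_i|^n$ and the remark on the unary case are just a slightly more verbose rendering of the same computation.
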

\begin{proof}
From Lemma \ref{lem:PDALambda} we know that every equivalence class in $\Lambda_n(\Top A)$ can be represented by a state of the form $(q,\boldsymbol{u})$, where $q\in Q$ and $\boldsymbol{u}\in \Gamma_1^{(n)}\times\dots\times\Gamma_k^{(n)}$. Therefore $\ind\Lambda_n(\Top A)\leq |Q|\cdot |\Gamma_1^{(n)}|\cdot\ldots\cdot |\Gamma_k^{(n)}|$.
Consequently,
\[\eta(L(\mathcal{A}))\leq\eta(\Top A)=\limsup_{n\to\infty}\frac{\log_2(\ind\Lambda_n(\Top A))}{n}\leq \log_2|\Gamma_1|+\dots +\log_2|\Gamma_k|.\]
This concludes the proof.
\end{proof}

We will see in Example \ref{exa:prodPali} that this upper bound can be reached using deterministic palindrome languages.
From this theorem the following conjecture from Schneider and Borchmann follows.
\begin{corollary}
Let $L$ be a language accepted by a deterministic $\varepsilon$-free $k$-counter automaton. Then 
\[\eta(L)=0.\]
\end{corollary}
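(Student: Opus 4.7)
The plan is to read this off directly from Theorem \ref{the:EntropyPDA}. By definition a $k$-counter automaton is a deterministic $\varepsilon$-free $k$-stack PDA in which each stack alphabet $\Gamma_i$ is unary, i.e.\ $|\Gamma_i|=1$, so $\log_2|\Gamma_i|=0$ for every $i$. Applying the theorem yields
\[\eta(L)\leq \log_2|\Gamma_1|+\dots+\log_2|\Gamma_k|=0.\]

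For the matching lower bound I would note that $\ind\Theta_n(L)\geq 1$ for every $n$ (the relation $\Theta_n(L)$ is an equivalence relation on the nonempty set $\Sigma^*$), so $\log_2(\ind\Theta_n(L))\geq 0$, and therefore $\eta(L)\geq 0$ by the definition as a $\limsup$. Combining the two inequalities gives $\eta(L)=0$.

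There is no real obstacle here: once Theorem \ref{the:EntropyPDA} is in hand, the corollary is just the specialization to unary stack alphabets, together with the trivial observation that topological entropy is nonnegative. The only thing worth flagging for the reader is the precise reduction: a $k$-counter automaton really is nothing but a $k$-stack PDA with $|\Gamma_i|=1$, and since the definition of a counter automaton inherits determinism and $\varepsilon$-freeness from its parent definition, the hypotheses of Theorem \ref{the:EntropyPDA} are met verbatim.
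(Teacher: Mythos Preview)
Your proof is correct and matches the paper's approach: the corollary is stated immediately after Theorem~\ref{the:EntropyPDA} with no proof, as the specialization to unary stack alphabets is immediate. One minor wording point: in the paper a $k$-counter automaton is not \emph{by definition} deterministic and $\varepsilon$-free; those hypotheses are supplied explicitly in the corollary statement, so your final paragraph should say the hypotheses of Theorem~\ref{the:EntropyPDA} are met because the corollary assumes them, not because they are inherited from the definition.
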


Furthermore, we are now able to obtain a lower bound on the number of symbols in the stack alphabet.
\begin{corollary}\label{cor:lowerBoundStackSymbols}
Let $\mathcal{A}=(Q,\Sigma,\Gamma, \delta,q_0,F)$ be a deterministic $\varepsilon$-free $1$-stack PDA accepting the language $L$. Then $|\Gamma|\geq 2^{\eta(L)}$.
\end{corollary}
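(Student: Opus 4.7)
The plan is to deduce this corollary as a direct specialisation of Theorem \ref{the:EntropyPDA}. First I would observe that in the $1$-stack case the theorem gives
\[
\eta(L)=\eta(L(\mathcal{A}))\leq \log_2|\Gamma|,
\]
since the sum on the right has only a single summand. Next I would exponentiate both sides with base $2$, which is monotone, to obtain $2^{\eta(L)}\leq 2^{\log_2|\Gamma|}=|\Gamma|$, as claimed.

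There is essentially no obstacle here: the corollary is just a rewriting of Theorem \ref{the:EntropyPDA} for $k=1$ together with the elementary fact that the exponential function is monotone. The only thing worth noting is that $|\Gamma|\geq 1$ (so $\log_2|\Gamma|$ is defined and non-negative) because a $\varepsilon$-free PDA that accepts any word at all must make at least one transition, and even if $|\Gamma|=1$ the inequality $2^{\eta(L)}\leq 1$ forces $\eta(L)=0$, which is consistent with the theorem. Hence the proof can be written in two lines.
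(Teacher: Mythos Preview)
Your proposal is correct and matches the paper's approach: the corollary is stated without proof precisely because it is the $k=1$ case of Theorem~\ref{the:EntropyPDA} followed by exponentiation, exactly as you describe. The extra remark about $|\Gamma|=1$ is fine but not needed.
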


Now we discuss what happens if we drop some of the restrictions of determinisity and $\varepsilon$-freeness. In the following example we see that in most cases the entropy can no longer be bounded.
\begin{example}
Define the language
\[\T_\infty=\{a^{n_1}\texttt{\#}\dots \texttt{\#}a^{n_k}b^la^m\mid m,k,l,n_1,\dots,n_k\in\N,l\leq k, n_l=m\}.\]
To determine the entropy of $\T_\infty$ consider the set $\{a^{n_1}\texttt{\#}\dots \texttt{\#}a^{n_k}\mid k,n_1,\dots,n_k\leq n\}$. If two words $u$ and $u'$ from this set differ in the $i$\textsuperscript{th} block, i.e., $n_i\not=n'_i$. Then $b^ia^{n_i}$ is a word of length at most $2n$ that witnesses $(u,u')\notin\Theta_{2n}(\T_\infty)$. Therefore $\ind\Theta_{2n}(\T_\infty)\geq n^n$ and
\[\eta(\T_\infty) \geq \limsup_{n\to\infty}\frac{\log_2(n^n)}{2n}=\infty.\]
Clearly, there is a nondeterministic $\varepsilon$-free 1-stack PDA, a deterministic 1-stack PDA, and a nondeterministic $\varepsilon$-free 2-counter automaton accepting $\T_\infty$.
\end{example}

\begin{corollary}
	All deterministic $\varepsilon$-free context-free languages have finite entropy. But deterministic context-free languages can have infinity entropy.
\end{corollary}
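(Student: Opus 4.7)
The plan is to derive both halves of the corollary as essentially direct consequences of results already established, with no new construction needed. The first sentence will follow from Theorem \ref{the:EntropyPDA} applied with $k=1$, and the second sentence will be witnessed by the language $\T_\infty$ from the preceding example.

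For the first claim, suppose $L$ is a deterministic $\varepsilon$-free context-free language. By definition there exists a deterministic $\varepsilon$-free $1$-stack PDA $\mathcal{A}=(Q,\Sigma,\Gamma,\delta,q_0,F)$ with $L(\mathcal{A})=L$. Since $\mathcal{A}$ satisfies exactly the hypotheses of Theorem \ref{the:EntropyPDA}, we obtain
\[\eta(L)=\eta(L(\mathcal{A}))\leq \log_2|\Gamma|.\]
Because $\Gamma$ is a finite alphabet by definition of a PDA, the right-hand side is a finite real number, so $\eta(L)<\infty$.

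For the second claim, I would simply invoke the language $\T_\infty$ from the previous example. It was shown there that $\eta(\T_\infty)=\infty$ and that $\T_\infty$ is accepted by a deterministic $1$-stack PDA (this automaton uses $\varepsilon$-transitions, for instance to pop symbols of the stack after finishing the relevant block). Hence $\T_\infty$ is a deterministic context-free language whose entropy is not finite, which gives the second half of the corollary.

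The only conceptual point worth flagging (rather than an actual obstacle, since no real work is required) is why dropping $\varepsilon$-freeness genuinely matters: an $\varepsilon$-move lets the PDA do an unbounded amount of stack manipulation between consecutive input letters, so the key estimate of Lemma \ref{lem:PDALambda} — that a witness of length $n$ only exposes the top $n$ symbols of each stack — is no longer available, and the whole chain leading to Theorem \ref{the:EntropyPDA} collapses. The example $\T_\infty$ shows this failure is not merely technical.
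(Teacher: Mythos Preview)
Your proposal is correct and matches exactly what the paper does: the corollary is stated without proof, being an immediate consequence of Theorem \ref{the:EntropyPDA} (with $k=1$) for the first sentence and of the preceding example establishing that $\T_\infty$ has infinite entropy and is accepted by a deterministic $1$-stack PDA for the second. Your additional remark on why $\varepsilon$-freeness is essential is a nice touch but not required.
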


The only thing left to study are 1-counter automata. The language $\T_\infty$ does not seem to be recognized by a 1-counter automaton. But the following modified version can be recognized by a nondeterministic one.
\[\B_k=\left\{ a_1^{n^1_1}\texttt{\#}\dots \texttt{\#}a_1^{n^1_{l_1}}\dots a_k^{n^k_1}\texttt{\#}\dots \texttt{\#}a_k^{n^k_{l_k}}ba_j^m\,\middle\vert\, \exists l.\ n^j_l=m\right\}\]

Consider the set
\[\left\{a_1^{n^1_1}\texttt{\#}\dots \texttt{\#}a_1^{n^1_{l_1}}\dots a_k^{n^k_1}\texttt{\#}\dots \texttt{\#}a_k^{n^k_{l_k}}\,\middle\vert\, n^i_1<n^i_2<\dots<n^i_{l_i}\leq n\text{ for all $1\leq i\leq k$}\right\}.\]
Any two words in this set can be separated by a witness of length at most $n$. Consequently, $\ind\Theta_n(\B_k)\geq 2^{nk}$ and
\[\eta(\B_k)\geq k.\]
Although the entropy of $\B_k$ is unfortunately not infinite we have at least found a family of languages with unbounded entropy. But beware we have increased the alphabet to obtain a higher entropy. The alphabet of $\B_k$ contains $k+2$ symbols. But if we encode $\B_k$ over a two element alphabet we obtain, by Corollary \ref{cor:encSameLength},
\[\eta(\enc(\B_k))\geq \frac{k}{\lceil\log_2(k+2)\rceil}.\]
This gives us a family with unbounded entropy over a fixed alphabet.

\begin{corollary}
Let $\Sigma$ be an at least two element alphabet.
The entropy of languages over $\Sigma$ accepted by nondeterministic 1-counter automata can not be uniformly bounded.
\end{corollary}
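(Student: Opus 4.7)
The plan is to use the family $(\B_k)_{k\in\N}$ constructed in the preceding discussion. The lower bound $\eta(\B_k)\geq k$ has already been verified, so the remaining task is to transport these examples to an arbitrary alphabet $\Sigma$ with $|\Sigma|\geq 2$ while keeping them inside the class of languages accepted by nondeterministic 1-counter automata.

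First I would spell out why $\B_k$ is itself accepted by a nondeterministic 1-counter automaton over its natural alphabet $\Gamma_k=\{a_1,\dots,a_k,b,\texttt{\#}\}$: the machine scans through the $\texttt{\#}$-separated blocks, nondeterministically guesses which block of $a_j$'s to record, increments the counter once per $a_j$ inside that chosen block, continues without altering the counter through the remaining input until the sentinel $b$, and then decrements the counter once per trailing $a_j$, accepting in a final state precisely when the counter empties as the input is exhausted.

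Next I would fix, for each $k$, an efficient block encoding $\enc\colon\Gamma_k\to\Sigma^{d_k}$ with $d_k=\lceil\log_2(k+2)\rceil$, which exists because $|\Sigma|\geq 2$. The encoded language $\enc(\B_k)$ is still accepted by a nondeterministic 1-counter automaton: absorb the decoding of $\enc$ into the finite control by reading $d_k$ symbols of $\Sigma$ between any two counter updates, so that each $d_k$-block of $\Sigma$-symbols triggers exactly one transition of the original automaton for $\B_k$. Since $d_k$ is a constant (depending only on $k$), the resulting machine still has a finite state set and a single counter, and the prefix property of $\enc$ makes the chunking unambiguous.

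Finally, Corollary \ref{cor:encSameLength} applied with block length $d_k$ yields
\[\eta(\enc(\B_k)) \;=\; \frac{\eta(\B_k)}{d_k} \;\geq\; \frac{k}{\lceil\log_2(k+2)\rceil},\]
and the right-hand side tends to infinity as $k\to\infty$; hence no uniform bound can exist. The only non-routine point is the closure step showing that block-encoded 1-counter languages remain 1-counter languages; this is a standard simulation argument, but it is the place where the proof could fail if $\enc$ had variable length or if somehow the number of counters were forced to grow, so it deserves explicit mention.
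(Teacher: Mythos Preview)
Your proposal is correct and follows exactly the approach of the paper: the corollary is drawn directly from the preceding discussion, which exhibits the family $\B_k$, notes that each $\B_k$ is accepted by a nondeterministic 1-counter automaton, and then invokes Corollary~\ref{cor:encSameLength} to obtain $\eta(\enc(\B_k))\geq k/\lceil\log_2(k+2)\rceil$ over a two-letter alphabet. You simply make explicit two points the paper leaves to the reader---the concrete description of the nondeterministic 1-counter machine for $\B_k$ and the closure of this automaton class under fixed-length block encodings---so there is no substantive difference in method.
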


For deterministic counter automata that have at least two counters allowing $\varepsilon$-transitions makes the automata model Turing complete. We will show that if there is only one counter than $\varepsilon$-transitions might increase the computational expressibility but not the entropy.

For the computational expressibility consider the language
\begin{align*}
L=\{a^{n_1}i_1a^{m_1}\texttt{\#}\dots \texttt{\#}a^{n_k}i_ka^{m_k}\mid{}& k\geq1, n_1,\dots,n_k,m_1,\dots,m_k\in\N, 
i_1,\dots,i_k\in\{0,1\},\\
&\text{$i_l=1$ implies $n_l=m_l$ for all $l\leq k$}\}. 
\end{align*}
There is a deterministic 1-counter automaton accepting $L$ but the $\varepsilon$-transitions seem to be absolutely necessary.

To determine the entropy let us fix a deterministic 1-counter automaton $\mathcal{A}=(Q,\Sigma,\delta,q_0,F)$. And define $\Top A$ as 
$((Q\cup\{p_\infty\})\times\N^\infty,\Sigma,\alpha,(q_0,0),F\times\N^\infty)$,
where $\N^\infty=\N\cup\{\infty\}$, $p_\infty$ is a new state, and $\alpha$ is defined in the following way:
\[\alpha((q,c),w)=
\begin{cases}
(p,d)& \text{if $(q,c,w)\vdash^*(p,d,\varepsilon)$ and not $(p,d,\varepsilon)\vdash(p',d'
,\varepsilon)$}\\
(p_\infty,\infty) &\text{otherwise}
\end{cases}
\]

Note that if $\mathcal A$ has no $\varepsilon$-transitions then $\Top A$ is the topological automation we constructed before.
To compute the entropy of $\Top A$ we first need to understand what transitions we added.
Fix $q\in Q$ and $a\in\Sigma$.
For every $c\in\N$ consider the unique maximal sequence 
\[(q,c,a)\vdash(q_1,c_1,w_1)\vdash(q_2,c_2,w_2)\vdash \dots\]
and define the function
$c\mapsto (p_c,m_c)$, where $\alpha((q,c),a)=(p_c,m_c)$.

Now there are two cases to consider. If there is a $C$ such that for all $i$ we have $c_i>0$. Then for all $c\geq C$ we have $m:=m_C-C=m_c-c$ and $p_c=p_C$ and $\alpha$ fulfils the following equation
\[\alpha((q,c),a)=
\begin{cases}
(p_c,m_c)&\text{if $c< C$}\\
(p_C,c+m)&\text{if $c\geq C$}
\end{cases}\]
Note that $m$ can be negative. 

In the other case we have for all $c$ that there is an $i$ with $c_i=0$. For larger $c$ the sequences become arbitrarily long and since there are only finitely many states there must be $j>i$ such that $q_i=q_j$, $w_i=w_j$, and $c_j<c_i$. Define $k=c_i-c_j$. Observe that the value of $\alpha((q,c),a)$ depends only on $c\bmod k$ and therefore \[\alpha((q,c),a)=(p_{c\bmod k},m_{c\bmod k}).\]

Let $\boldsymbol C$ be the maximum of all $C$, $\boldsymbol M\geq 1$ minimal such that no $m$ is smaller than $-\boldsymbol M$, and $K$ the set containing all $k$'s.  Now we can deduce the analogue of Lemma \ref{lem:PDALambda}.
\begin{lemma}\label{lem:OneCAEpsLambda}
If $c,c'>\boldsymbol M\cdot n\geq \boldsymbol C$, and $c\equiv c'\pmod k$ for all $k\in K$. Then $((q,c),(q,c'))\in\Lambda_n(\Top A)$ for all $q\in Q\cup\{p_\infty\}$.
\end{lemma}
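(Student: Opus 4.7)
The plan is to induct on $n$, closely following the skeleton of the proof of Lemma~\ref{lem:PDALambda}. First, I would dispose of the case $q=p_\infty$ separately: since no transitions of $\mathcal A$ originate at $p_\infty$, one has $\alpha((p_\infty,c),w)=(p_\infty,\infty)$ for every $w\neq\varepsilon$ and $\alpha((p_\infty,c),\varepsilon)=(p_\infty,c)$, so the state component never lies in $F$ and $(p_\infty,c),(p_\infty,c')$ lie in $\Lambda_n(\Top A)$ for free. The base case $n=0$ for $q\in Q$ is then immediate: $\alpha((q,c),\varepsilon)=(q,c)$, which is accepting iff $q\in F$, independently of $c$.

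For the inductive step I would write a witness as $w=aw'\in\Sigma^{(n)}$ with $|w'|\le n-1$ and split on which of the two regimes $(q,a)$ belongs to. In Case~A (the $\varepsilon$-chain of $(q,a)$ never empties its counter), the assumption $c,c'>\boldsymbol M\cdot n\ge \boldsymbol C\ge C$ puts both counters in the linear regime, so $\alpha((q,c),a)=(p_C,c+m)$ and $\alpha((q,c'),a)=(p_C,c'+m)$. The state components agree; the new counters again differ by $c-c'$ and hence remain congruent modulo every $k\in K$; and $|m|\le\boldsymbol M$ gives $c+m,c'+m>\boldsymbol M(n-1)$. I would then apply the inductive hypothesis to this successor pair with the shorter word $w'$. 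In Case~B, $\alpha((q,\,\cdot\,),a)$ depends only on the residue modulo some $k\in K$, and the congruence $c\equiv c'\pmod k$ forces $\alpha((q,c),a)=\alpha((q,c'),a)$ literally; the two trajectories then coincide for the remainder of $w'$, so $\Lambda_{n-1}$-equivalence is trivial.

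The delicate point is ensuring that the inductive hypothesis applies cleanly after one step: Case~A produces new counters above $\boldsymbol M(n-1)$ (from $|m|\le\boldsymbol M$), but to propagate the induction one also needs them to stay large enough at each subsequent step to remain in the linear regime. This reduces to the arithmetic observation that along a length-$n$ computation the counter can decrease by at most $\boldsymbol M$ per letter, so after $i$ steps it still exceeds $\boldsymbol M(n-i)$; combined with $\boldsymbol M\cdot n\ge\boldsymbol C$ this keeps the linear formula valid at every intermediate step. Should a Case~B transition occur at any point, it collapses the two trajectories to a single configuration and they agree forever after, so the remaining bookkeeping is purely arithmetic.
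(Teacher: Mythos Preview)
Your approach is exactly the paper's: induction on $n$, with the inductive step splitting on the two regimes for $(q,a)$ and handling each just as the paper does (linear shift $c\mapsto c+m$ plus the induction hypothesis in Case~A, literal coincidence of the images via the residue hypothesis in Case~B). The separate treatment of $p_\infty$ and your closing ``delicate point'' paragraph are additions not present in the paper; the latter worries about the side condition $\boldsymbol M(n-1)\ge\boldsymbol C$ needed to re-invoke the induction hypothesis, a point the paper's own proof also leaves implicit, and your phrase ``combined with $\boldsymbol M\cdot n\ge\boldsymbol C$'' does not literally close that gap either---but the core argument matches the paper line for line.
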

\begin{proof}

We will proof the lemma  by induction on $n$.

If $n=0$. Then the only witness to consider is $\varepsilon$ and the claim is trivial.

If $n>0$. Then consider a witness $aw$ of length at most $n$. 
If $(q,a)$ gives a transition as in the first case, then let $C$ and $m$ be the parameters as above. Since $c,c'\geq \boldsymbol C\geq C$ we have 
\begin{align*}
\alpha((q,c),a)=(p_C,c+m) && \text{and}&&\alpha((q,c'),a)=(p_C,c'+m).
\end{align*}
By assumption $m\geq -M$, hence $c+m,c'+m\geq \boldsymbol M\cdot(n-1)$. Also $c+m\equiv c'+m\pmod k$ for all $k\in K$. Therefore the claim holds by induction hypotheses.

If the transition is as in the second case with parameter $k$. Then, by assumption,
\[\alpha((q,c),a)=(p_{c\bmod k},m_{c\bmod k})=(p_{c'\bmod k},m_{c'\bmod k})=\alpha((q,c'),a)\]
and the claim holds trivially.
\end{proof}

Using this lemma we can deduce the desired theorem.
\begin{theorem}
Let $\mathcal A=(Q,\Sigma,\delta,q_0,F)$ be a deterministic 1-counter automaton with $\varepsilon$-transitions. Then
\[\eta(L(\mathcal{A}))=0.\]
\end{theorem}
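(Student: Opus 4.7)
The plan is to turn Lemma \ref{lem:OneCAEpsLambda} into an upper bound on $\ind \Lambda_n(\Top{A})$ that grows only linearly in $n$. From this the conclusion $\eta(L(\mathcal{A}))=0$ follows immediately via the identity $\eta(L(\mathcal{A}))\leq \eta(\Top{A}) = \limsup_{n\to\infty} \frac{\log_2 \ind \Lambda_n(\Top A)}{n}$, since $\log_2(O(n))/n \to 0$.

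More concretely, the first step is to fix $n$ and classify the states $(q,c)\in (Q\cup\{p_\infty\})\times \N^\infty$ according to how Lemma \ref{lem:OneCAEpsLambda} forces them to be identified in $\Lambda_n(\Top A)$. For the special state $p_\infty$ paired with $\infty$, this contributes at most one extra class (since there is only one configuration of $\Top A$ of that shape that is actually reachable or relevant). For each $q\in Q$, the counter values $c$ split into two regimes. In the ``small'' regime $c\leq \boldsymbol{M}\cdot n$ there are at most $\boldsymbol{M}\cdot n+1$ values of $c$, each of which can lie in a distinct class in the worst case, giving a contribution of $|Q|\cdot(\boldsymbol{M}\cdot n+1)$. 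In the ``large'' regime $c>\boldsymbol{M}\cdot n$, Lemma \ref{lem:OneCAEpsLambda} tells us that $(q,c)$ and $(q,c')$ are $\Lambda_n(\Top A)$-equivalent whenever $c\equiv c'\pmod{k}$ for every $k\in K$, i.e.\ whenever $c\equiv c'$ modulo $\ell := \operatorname{lcm}(K)$ (with the convention $\ell=1$ if $K=\emptyset$). Hence the large regime contributes at most $|Q|\cdot \ell$ additional classes.

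Putting these pieces together, I would conclude
\begin{align*}
\ind \Lambda_n(\Top{A}) \;\leq\; |Q|\cdot(\boldsymbol{M}\cdot n+1) + |Q|\cdot \ell + 1,
\end{align*}
which is linear in $n$ with constants $\boldsymbol{M}$, $\ell$, and $|Q|$ depending only on $\mathcal{A}$. Taking $\log_2$ and dividing by $n$ therefore gives
\begin{align*}
\eta(L(\mathcal{A})) \;\leq\; \eta(\Top A) \;=\; \limsup_{n\to\infty} \frac{\log_2 \ind \Lambda_n(\Top A)}{n} \;\leq\; \limsup_{n\to\infty}\frac{\log_2(|Q|\boldsymbol{M}\,n + \text{const})}{n} \;=\; 0.
\end{align*}
Combined with the trivial lower bound $\eta(L(\mathcal{A}))\geq 0$, this yields the theorem.

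I expect the only subtle point to be the bookkeeping in the large regime: one must observe that requiring $c\equiv c'\pmod{k}$ for \emph{every} $k\in K$ really does collapse $\{c>\boldsymbol{M}\cdot n\}$ into at most $\operatorname{lcm}(K)$ residue classes, and that the finiteness of $K$ (which comes from the finiteness of $Q\times\Sigma$, since each transition $(q,a)$ contributes at most one value $k$) is what keeps $\ell$ constant in $n$. Handling the special state $p_\infty$ and the value $\infty$ is also purely bookkeeping: it contributes only a bounded number of classes and thus does not affect the asymptotics.
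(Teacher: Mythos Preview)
Your proposal is correct and follows essentially the same route as the paper's proof: both invoke Lemma~\ref{lem:OneCAEpsLambda} to conclude that every $\Lambda_n(\Top A)$-class has a representative with bounded counter value, yielding a linear bound on $\ind\Lambda_n(\Top A)$ and hence entropy zero. The only cosmetic differences are that the paper uses $\prod_{k\in K}k$ where you use $\operatorname{lcm}(K)$, and it packages the two regimes into a single $\max\{\boldsymbol M\cdot n,\prod_{k\in K}k\}$ bound rather than summing them, arriving at $\ind\Lambda_n(\Top A)\leq(|Q|+1)\cdot\boldsymbol M\cdot n$ for large $n$.
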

\begin{proof}
Note that for every $c$ there is a $c'\leq \prod_{k\in  K}k$ such that $c\equiv c'\pmod k$ for all $k\in K$. Therefore Lemma \ref{lem:OneCAEpsLambda} implies that every class in $\Lambda_n(\Top A)$ contains a state where the counter value is at most $\max\{\boldsymbol M\cdot n,\prod_{k\in  K}k\}$. Hence for large $n$ 
\[\ind\Lambda_n(\Top A)\leq (|Q|+1)\cdot \boldsymbol M\cdot n\]
and $\eta(L(\mathcal{A}))=0$.
\end{proof}

In Table \ref{tab:upperBoundsUpdated} we summarize the results from this section.

\begin{table}[ht]
\centering
\begin{tabular}{l!{\vrule width 1.5pt}c|c!{\vrule width 1.5pt}c|c!{\vrule width 1.5pt}}
\multicolumn{1}{c!{\vrule width 1.5pt}}{}&\multicolumn{2}{c!{\vrule width 1.5pt}}{$\varepsilon$-transitions}&\multicolumn{2}{c!{\vrule width 1.5pt}}{$\varepsilon$-free}\\
\multicolumn{1}{c!{\vrule width 1.5pt}}{}&nondet.&deterministic&nondet.&deterministic\\\noalign{\hrule height 1.5pt}
$1$-counter automaton&$\infty\ (\uparrow\B_n)$&$0$&$\infty\ (\uparrow\B_n)$&$0$\\\hline
$1$-stack PDA&$\infty\ (\T_\infty)$&$\infty\ (\T_\infty)$&$\infty\  (\T_\infty)$&$\log_2|\Gamma|$\\\noalign{\hrule height 1.5pt}
$k$-counter automaton&$\underline{\infty}$&$\underline{\infty}$&$\infty\ (\T_\infty)$&$0$\\\hline
$k$-stack PDA&$\underline{\infty}$&$\underline{\infty}$&$\infty\ (\T_\infty)$&$\sum_{i=1}^k\log_2|\Gamma_i|$\\\noalign{\hrule height 1.5pt} 
\end{tabular}
\caption{Upper bounds for the entropy of languages accepted by certain kinds of automata. Here $k$ is at least 2. The $\underline{\infty}$ indicates that the model is Turing complete.}\label{tab:upperBoundsUpdated}
\end{table}

We have shown that most of the upper bounds presented in this table can be reached. But we suspect that this is not the case for all bounds.

\begin{conjecture}\label{con:oneCounterFiniteEntropy}
All languages accepted by 1-counter automata have finite entropy.
\end{conjecture}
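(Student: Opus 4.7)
The plan is to generalise the subset construction used for deterministic $1$-counter automata to the nondeterministic setting, exploiting the ``low / high counter'' dichotomy that is implicit in the proof of Lemma~\ref{lem:OneCAEpsLambda}.

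I would first treat the $\varepsilon$-free case. Let $\mathcal A=(Q,\Sigma,\delta,q_0,F)$ be a nondeterministic $\varepsilon$-free $1$-counter automaton, let $M$ be the maximum counter change per transition, and write $R(u)\subseteq Q\times\N$ for the set of configurations reachable from $(q_0,0)$ on input $u$. Any computation reading a word of length at most $n$ from a configuration $(q,c)$ with $c>Mn$ keeps the counter strictly positive, so its acceptance depends only on $q$. This lets me associate to $u$ the ``$n$-fingerprint''
\[\mathrm{fp}_n(u)=\bigl(\,R(u)\cap(Q\times\{0,\dots,Mn\}),\ \{q\in Q\mid\exists c>Mn,\ (q,c)\in R(u)\}\,\bigr),\]
and show, in the same spirit as the proof of Theorem~\ref{the:EntropyPDA}, that $\mathrm{fp}_n(u)=\mathrm{fp}_n(v)$ implies $(u,v)\in\Theta_n(L(\mathcal A))$. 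Counting gives $\ind\Theta_n\le 2^{|Q|(Mn+2)}$, hence $\eta(L(\mathcal A))\le |Q|M$, which is finite.

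For the general case with $\varepsilon$-transitions I would lift the case analysis from Lemma~\ref{lem:OneCAEpsLambda} to the subset construction. For each pair $(q,a)\in Q\times(\Sigma\cup\{\varepsilon\})$ and each nondeterministic choice, the counter either stabilises linearly in $c$ above some threshold (the first case in the deterministic proof) or depends on $c$ only through a residue modulo some period $k$ (the second case). Taking the union over all nondeterministic choices still yields a finite family of such patterns, controlled by constants $\boldsymbol M$ and $K$ depending only on $\mathcal A$. Using them, the $\varepsilon$-closure $R^\varepsilon(u)$ of the reachable set should still be summarisable by (i) a low part inside $Q\times\{0,\dots,\boldsymbol Mn\}$ and (ii) a high part consisting of pairs $(q,r)$ with $r$ a residue class modulo $\operatorname{lcm}(K)$. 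The number of such summaries is again $2^{O(n)}$, so finite entropy follows.

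The main obstacle is making this second step rigorous. In the nondeterministic subset construction, different branches may enter the ``high'' or ``low'' regime at different times, so one cannot naively apply Lemma~\ref{lem:OneCAEpsLambda} branchwise and then combine. Moreover $R^\varepsilon(u)$ can be genuinely infinite, since $\varepsilon$-loops may arbitrarily raise the counter. The delicate point is to show that $R^\varepsilon(u)$ nevertheless admits a bounded-complexity semilinear description uniformly in $u$; I expect this will require a pumping argument showing that any $\varepsilon$-computation longer than some threshold $T_0(n)=O(n)$ can be shortcut without affecting which configurations matter for witnesses of length at most $n$, so that only $O(n)$ counter values need to be tracked explicitly before the residue classes take over.
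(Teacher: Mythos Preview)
This statement is listed in the paper as an open \emph{conjecture}; the paper gives no proof of it, so there is nothing to compare your attempt against.

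On the merits of your proposal: the $\varepsilon$-free half is correct and goes strictly beyond what the paper establishes. Your fingerprint
\[
\mathrm{fp}_n(u)=\bigl(R(u)\cap(Q\times\{0,\dots,Mn\}),\ \{q\mid\exists c>Mn,\ (q,c)\in R(u)\}\bigr)
\]
does determine the $\Theta_n$-class of $u$, because from any $(q,c)$ with $c>Mn$ every length-$\le n$ run avoids the zero test and hence uses only the ``positive'' transitions, so the reached control state is independent of $c$. This yields $\eta(L(\mathcal A))\le |Q|\,M<\infty$ for every nondeterministic $\varepsilon$-free $1$-counter automaton, sharpening the $\infty\,(\uparrow\B_n)$ entry in Table~\ref{tab:upperBoundsUpdated}: the entropies in that column are unbounded as a family, but each individual language has finite entropy.

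The part with $\varepsilon$-transitions is, as you yourself say, not a proof but a plan, and the obstacle you name is the real one. The deterministic analysis behind Lemma~\ref{lem:OneCAEpsLambda} works because from each $(q,a)$ there is a \emph{single} $\varepsilon$-trajectory, forcing exactly one of the two shapes (affine in $c$ above a threshold, or determined by $c\bmod k$). Nondeterministically, one configuration can spawn branches of both shapes simultaneously, and $\varepsilon$-loops that increase the counter make $R^\varepsilon(u)$ genuinely infinite. Your hope that $R^\varepsilon(u)$ is semilinear with parameters bounded uniformly in $u$ is plausible---reachability sets in one-counter systems are semilinear---but that alone does not give a $2^{O(n)}$ bound on fingerprints: you also need that acceptance from a high configuration $(q,c)$ under any witness of length $\le n$ depends only on $q$ and on $c$ modulo a fixed period, and with $\varepsilon$-transitions a single input letter can trigger an $\varepsilon$-cascade that drives the counter from arbitrarily high down to a low value depending on $c$ in a way not captured by a single residue. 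Until that coupling between the ``high'' summary and the subsequent low-counter behaviour is controlled rigorously, the conjecture stays open.
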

This concludes our investigation of push-down automata.

\section{Entropy of Example Languages}\label{sec:Languages}


The purpose of this section is to give entropies of selected example languages. 
Some of these examples were already discussed in \cite{schneiderborchmann}, 
the results from Examples \ref{exa:Dyck2}, \ref{exa:DetPali}, and \ref{exa:prime} are new and have to our knowledge not been discussed before.

\begin{example}
Let $L$ be a regular language. Then $\Theta(L)$ is finite. Hence
\[\eta(L)=\limsup_{n\to\infty}\frac{\log_2\ind\Theta_n(L)}{n}\leq \limsup_{n\to\infty}\frac{\log_2\ind\Theta(L)}{n}=0.\]
All regular languages have zero entropy, which goes with the intuition that regular languages are simple.
\end{example}



\begin{example}\label{exa:Dyck2}
An example considered in \cite{schneiderborchmann} is the \define{Dyck language with $k$ sorts of parenthesis}, which consists of all balanced strings over $\{(_1,)_1,\dots,(_k,)_k\}$.
More generally, let $\Gamma$ be an alphabet and $\overline{\Gamma}=\{\overline{a}\mid a\in\Gamma\}$. Then $\overline{\phantom{a}}\colon\Gamma\to\overline{\Gamma}$ is a bijection. 
Now the \define{Dyck language over $\Gamma$}, denoted by $\Dyck_\Gamma$, is the set of all words $u$ such that successively replacing $a\overline{a}$ in $u$ by $\varepsilon$ results in $\varepsilon$.

\begin{lemma}\label{lem:Dyck}
For all alphabets $\Gamma$ we have $\eta(\Dyck_\Gamma)=\log_2|\Gamma|$.
\end{lemma}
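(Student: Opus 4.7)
The plan is to establish $\eta(\Dyck_\Gamma) = \log_2|\Gamma|$ by proving the matching upper and lower bounds, each with a direct and short argument.

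For the upper bound, I would invoke Theorem~\ref{the:EntropyPDA} applied to a carefully chosen PDA. Specifically, I would build a deterministic $\varepsilon$-free $1$-stack PDA for $\Dyck_\Gamma$ whose stack alphabet is exactly $\Gamma$: on reading $a \in \Gamma$ push $a$, on reading $\overline{a}$ pop if the top of the stack is $a$ (and otherwise transition to a non-accepting sink), and accept exactly when the PDA is in the initial state with empty stack. The key observation that makes this work with only $|\Gamma|$ stack symbols (so that no bottom marker is needed) is that the transition relation in the paper's PDA definition ranges over $\Gamma_1^{(1)}$, which includes $\varepsilon$, so emptiness of the stack can be tested directly. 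Theorem~\ref{the:EntropyPDA} then yields $\eta(\Dyck_\Gamma) \leq \log_2|\Gamma|$.

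For the lower bound, I would exhibit $|\Gamma|^n$ pairwise $\Theta_n$-inequivalent words. The natural candidates are all words of length $n$ over the unbarred alphabet $\Gamma$. Given two distinct such words $u = a_1 \cdots a_n$ and $v = b_1 \cdots b_n$, take $w = \overline{a_n}\,\overline{a_{n-1}}\cdots\overline{a_1}$, which has length $n$. Then $uw \in \Dyck_\Gamma$ because the innermost pairs $a_i \overline{a_i}$ cascade to $\varepsilon$, while $vw \in \Dyck_\Gamma$ would force $b_i = a_i$ for every $i$, contradicting $u \neq v$. Hence $w$ witnesses $(u,v) \notin \Theta_n(\Dyck_\Gamma)$, giving $\ind \Theta_n(\Dyck_\Gamma) \geq |\Gamma|^n$ and therefore
\[
\eta(\Dyck_\Gamma) \;\geq\; \limsup_{n\to\infty} \frac{\log_2 |\Gamma|^n}{n} \;=\; \log_2|\Gamma|.
\]

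There is no real obstacle here; the argument is essentially a direct application of Theorem~\ref{the:EntropyPDA} together with a tailored distinguishing family. The only delicate point is the bookkeeping in the upper bound: one must be sure to use a stack alphabet of size exactly $|\Gamma|$ rather than $|\Gamma|+1$, which is possible only because the PDA model permits tests for an empty stack.
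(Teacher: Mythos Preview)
Your argument is sound and reaches the right conclusion, but it takes a different route from the paper and has one small technical wrinkle worth flagging.

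The paper proceeds purely combinatorially: it observes that $\Gamma^{(n)} \cup \{\overline a\}$ (all words of length at most $n$ over $\Gamma$, together with one ``dead'' word) is a complete set of representatives for $\Theta_n(\Dyck_\Gamma)$, so $\ind\Theta_n(\Dyck_\Gamma) = |\Gamma^{(n)}| + 1$ exactly, and the entropy follows in one line. This is self-contained and does not invoke Theorem~\ref{the:EntropyPDA}. Your approach instead splits the work: the lower bound via an explicit distinguishing family on $\Gamma^n$ (which is essentially the idea underlying the paper's representative set), and the upper bound via the PDA bound. Your route has the merit of showcasing Theorem~\ref{the:EntropyPDA} on a concrete example; the paper's route is more elementary, yields the exact index rather than just asymptotic bounds, and sidesteps the PDA construction altogether.

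One wrinkle: acceptance in the paper's PDA model is by final state only, with stack contents ignored. Your phrase ``accept exactly when the PDA is in the initial state with empty stack'' is therefore not directly realisable---with a single non-sink state that pushes on opens and pops on matching closes, the word consisting of a single open bracket would already be accepted. The fix is standard and keeps the stack alphabet equal to $\Gamma$: store the would-be top of stack in the finite control (an accepting state $q_\varepsilon$, states $q_a$ for each $a \in \Gamma$, and a sink), keeping the physical stack one symbol short. Popping the virtual top then exposes the physical top (or $\varepsilon$), which the transition relation can read, so the automaton knows precisely when to return to $q_\varepsilon$. With this adjustment your upper bound goes through with stack alphabet exactly $\Gamma$, as you intended.
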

\begin{proof}
Let $a\in\Gamma$. Observe that the set $\Gamma^{(n)}\cup\{\overline{a}\}$ contains exactly one representative of each class of $\Theta_n(\Dyck_\Gamma)$. Therefore $\ind\Theta_n(\Dyck_\Gamma)=|\Gamma^{(n)}|+1$ and we can compute
\begin{align*}
\eta(\Dyck_\Gamma)=\limsup_{n\to\infty}\frac{\log_2 (|\Gamma^{(n)}|+1)}{n}=\log_2|\Gamma|.
\end{align*}
This finishes the proof.
\end{proof}
\end{example}

Another example discussed in \cite{schneiderborchmann} is the palindrome language. For an alphabet $\Sigma$ define the \define{palindrome language over $\Sigma$} to be
\[\Pali_\Sigma=\{uu^R \mid u\in\Sigma^*\}.\]
Schneider and Borchmann showed that $\log_2|\Sigma|\leq \eta(\Pali_\Sigma)\leq \log_2|\Sigma|+1$.

\begin{example}\label{exa:DetPali}
We will consider here the \define{deterministic palindrome language}. Assume that $\texttt{\#}$ is not in $\Sigma$ and define
\[\DetPali_\Sigma=\{u\texttt{\#}u^R\mid u\in\Sigma^*\}.\]
The entropy can be computed similar to that of the Dyck languages. Consider a class in $\Theta_n(\DetPali_\Sigma)$ with at least one positive witness. Let $w$ be the positive witness of minimal length. Then either $w=\texttt{\#}w'$ or $w\in\Sigma^*$. In the first case the class is represented by $w'^R$ and in the second case it is represented by $w^R\texttt{\#}$. Therefore
\[\eta(\DetPali_\Sigma)=\limsup_{n\to\infty}\frac{\log_22\cdot|\Sigma^{(n)}|}{n}=\log_2|\Sigma|.\]
\end{example}

Both $\Dyck_\Gamma$ or $\DetPali_\Gamma$ are typical examples for languages recognized by PDAs. We can apply Corollary \ref{cor:lowerBoundStackSymbols} to obtain the following result.

\begin{corollary}
Every deterministic $\varepsilon$-free $1$-stack PDA recognizing $\Dyck_\Gamma$ or $\DetPali_\Gamma$ has at least $|\Gamma|$ many stack symbols.
\end{corollary}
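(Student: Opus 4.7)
The plan is to apply Corollary \ref{cor:lowerBoundStackSymbols} directly, using the entropy computations already established for the two languages. That corollary tells us that any deterministic $\varepsilon$-free $1$-stack PDA recognizing a language $L$ must use a stack alphabet of size at least $2^{\eta(L)}$, so all we need is a lower bound of $\log_2|\Gamma|$ on the entropy of each of the two languages.

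For $\Dyck_\Gamma$, Lemma \ref{lem:Dyck} gives $\eta(\Dyck_\Gamma)=\log_2|\Gamma|$, and for $\DetPali_\Gamma$, the computation in Example \ref{exa:DetPali} gives $\eta(\DetPali_\Gamma)=\log_2|\Gamma|$. Plugging either value into Corollary \ref{cor:lowerBoundStackSymbols} yields
\[|\Gamma_{\text{stack}}|\geq 2^{\log_2|\Gamma|}=|\Gamma|,\]
where $\Gamma_{\text{stack}}$ denotes the stack alphabet of the hypothetical PDA. This completes the argument in both cases.

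There is essentially no obstacle here: the two entropy values have already been determined, and the corollary packaging the relation between entropy and stack alphabet size has already been proved. The only minor care needed is to confirm that the languages $\Dyck_\Gamma$ and $\DetPali_\Gamma$ are indeed of the form handled by Corollary \ref{cor:lowerBoundStackSymbols} (a deterministic $\varepsilon$-free $1$-stack PDA over some input alphabet), which is immediate from the hypothesis of the corollary being proved. Thus the proof will be a two-line deduction citing the earlier results.
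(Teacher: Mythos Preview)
Your proposal is correct and matches the paper's approach exactly: the paper simply remarks that one applies Corollary~\ref{cor:lowerBoundStackSymbols} together with the entropy values $\eta(\Dyck_\Gamma)=\eta(\DetPali_\Gamma)=\log_2|\Gamma|$ from Lemma~\ref{lem:Dyck} and Example~\ref{exa:DetPali}, and states the corollary without further proof.
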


\begin{example}\label{exa:prodPali} We construct a product palindrome language.
	Let $\Gamma_1,\dots,\Gamma_k$ be alphabets 
	and consider the language $\DetPali_{\Gamma_1\times\dots\times\Gamma_k}$. As we have just seen the entropy of this language is
	\begin{align*}
	\eta(\DetPali_{\Gamma_1\times\dots\times\Gamma_k})=\log_2|\Gamma_1|+\dots+\log_2|\Gamma_k|.
	\end{align*}
\end{example}

Obviously, there is a deterministic $\varepsilon$-free $k$-stack PDA with stack alphabets $\Gamma_1,\dots,\Gamma_k$ accepting $\DetPali_{\Gamma_1\times\dots\times\Gamma_k}$. Therefore the upper bound given in Theorem \ref{the:EntropyPDA} can be reached.


Finally, we will discuss the mathematically very interesting language of all prime numbers.
\begin{example}\label{exa:prime}
Let $\Prime=\{p\mid \text{$p$ is prime}\}$ be the unary encoding of all prime numbers.
For $m>2$ there cannot be two consecutive numbers in $\Pos_m$ since the only $k$ for which $k$ and $k+1$ is prime is $2$. Because of this, $\Pos_m$ contains only even numbers or only odd numbers if $m$ is odd or even, respectively. Consequently, $\ind\Theta_n(\Prime)\leq 2\cdot 2^{\lceil n/2\rceil}+3$ and $\eta(\Prime)\leq \frac{1}{2}$. We can make a similar argument for the prime 3.  For $m>3$ there cannot be a $k$ such that $k,k+2,k+4\in \Pos_m$ because one of these numbers is divisible by 3. 
Thus $\ind\Theta_n(\Prime)\leq 2\cdot3\cdot 2^{\lceil \frac{n}{2}\cdot\frac{2}{3}\rceil}+4$ and $\eta(\Prime)\leq\frac{1}{3}$. These observations lead to the following definition:
\begin{definition}
A set $A\subseteq\{0,\dots,n-1\}$ \define{represents a plausible sequence of primes of length $n$} if for all primes $p\leq n$
\[A\bmod p\neq \{0,\dots,p-1\}.\]
The set $A$ \define{represents an occurring sequence of primes of length $n$} if there is a $k\in\N$ such that for all $l\in\{0,\dots,n-1\}$
\[k+l\text{ is prime}\iff l\in A.\]
We denote the number of plausible sequences of length $n$ by $s_{n}$.
\end{definition}
Note that for example the set $\{0,1\}$ occurs for $k=2$, but is not plausible since $\{0,1\}\bmod2=\{0,1\}$. This can happen because the idea behind the plausible sequences is that every number in the sequence dividable by $p$ is not a prime, which holds for all numbers except for $p$ itself. But if we say that the starting point $k$ of the sequence is greater than the length $n$ of the sequence, than $p$ cannot occur in $k+\{0,\dots,n-1\}$. Hence if $A$ represents an occurring sequence of length $n$, which occurs for some $k>n$, then $A$ also represents a plausible sequence. 

With this observation we can bound $\ind\Theta_n(\Prime)$ by 
\[|\{[k]\mid k\leq n+1\}|+|\{[k]\mid k> n+1\}|\leq n+2+s_{n+1}.\]

\begin{lemma}\label{lem:primeBound}
Let $p_1,\dots,p_k$ be the first $k$ primes and $n=n'\cdot\prod_{i\leq k}p_i$. We have that 
\[s_n\leq\left(\prod_{i\leq k}p_i\right)\cdot2^{n\cdot\prod_{i\leq k}\frac{p_i-1}{p_i}}.\]
\end{lemma}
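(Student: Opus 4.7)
The plan is to weaken the plausibility condition so that only the first $k$ primes play a role. Every set $A$ that represents a plausible sequence of length $n$ in particular satisfies, for each $i\le k$, the condition $A\bmod p_i\neq\{0,\dots,p_i-1\}$, so an upper bound on the number of $A\subseteq\{0,\dots,n-1\}$ satisfying just these $k$ conditions is also an upper bound on $s_n$.

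To count such $A$, I would introduce, for each $i\le k$, a ``witness residue'' $r_i\in\{0,\dots,p_i-1\}$ with $r_i\notin A\bmod p_i$ (such an $r_i$ exists by the weakened condition, and I fix one arbitrary choice). This gives a map from sets $A$ as above to the collection
\[T=\bigl\{(r_1,\dots,r_k,A)\,\bigl|\,r_i\in\{0,\dots,p_i-1\},\ A\subseteq B_{r_1,\dots,r_k}\bigr\},\]
where $B_{r_1,\dots,r_k}=\{x\in\{0,\dots,n-1\}\mid x\not\equiv r_i\pmod{p_i}\text{ for all }i\le k\}$. This map is injective on its first coordinate (once $r_1,\dots,r_k$ are chosen, $A$ determines itself), so $s_n\le|T|$.

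To evaluate $|T|$, I would fix the tuple $(r_1,\dots,r_k)$ (there are $\prod_{i\le k}p_i=P$ such tuples) and count subsets of $B_{r_1,\dots,r_k}$. Because $P\mid n$, the Chinese Remainder Theorem says that in every block of $P$ consecutive integers exactly $\prod_{i\le k}(p_i-1)$ of them avoid the residue $r_i$ modulo $p_i$ for every $i$; therefore
\[|B_{r_1,\dots,r_k}|=\frac{n}{P}\cdot\prod_{i\le k}(p_i-1)=n\cdot\prod_{i\le k}\frac{p_i-1}{p_i}.\]
The number of subsets of $B_{r_1,\dots,r_k}$ is $2^{n\cdot\prod_{i\le k}(p_i-1)/p_i}$, and summing over the $P$ choices of $(r_1,\dots,r_k)$ yields the stated bound.

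There is no real obstacle here: the proof is essentially a clean sieve argument combined with the trivial counting principle that $|\mathcal{P}(B)|=2^{|B|}$. The only step that needs care is noticing that one must use the hypothesis $P\mid n$ to make the CRT count come out exactly to $n\prod(p_i-1)/p_i$ rather than only asymptotically; in the general $n$ case one would pick up an extra additive error term, which is why the statement restricts to $n$ of the form $n'\cdot\prod_{i\le k}p_i$.
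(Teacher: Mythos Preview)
Your proof is correct and follows essentially the same approach as the paper: fix a tuple of witness residues, observe that $A$ must lie in the complement of the corresponding residue classes, count that complement, and sum over the $\prod_{i\le k}p_i$ tuples. The only cosmetic difference is that the paper computes $|B_{r_1,\dots,r_k}|$ via inclusion--exclusion whereas you invoke the Chinese Remainder Theorem directly.
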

\begin{proof}
For every $i\leq k$ fix an $\ell_i\in\{0,\dots,p_i-1\}$. Now we bound the number of plausible sequences $A$ of length $n$ with $\ell_i\notin (A\bmod p_i)$ for all $i$.
Clearly,
\[A\subseteq \bigcap_{i\leq k}\left(\{0,\dots,n-1\}\setminus\{m\cdot p_i + \ell_i\mid m<\frac{n}{p_i}\}\right).\]
Furthermore,
\begin{align*}
\left|\{0,\dots,n-1\}\setminus\{m\cdot p_i + \ell_i\mid m<\frac{n}{p_i}\}\right|=n-\frac{n}{p_i}
\end{align*}
and
\begin{align*}
\left|\bigcap_{i\leq k}\left(\{0,\dots,n-1\}\setminus\{m\cdot p_i + \ell_i\mid m<\frac{n}{p_i}\}\right)\right|&=\sum_{m, j_1,\dots,j_m\leq k}(-1)^m\frac{n}{p_{j_1}\cdot\ldots\cdot  p_{j_m}}\\
&={n\cdot\prod_{i\leq k}\frac{p_i-1}{p_i}}.
\end{align*}
Together with the fact that there are $\prod_{i\leq k}p_i$ possible choices for $\ell_1,\dots,\ell_k$ this proves the claim.
\end{proof}

Using this lemma we can compute the entropy.

\begin{theorem}
The entropy $\eta(\Prime)$ is zero.
\end{theorem}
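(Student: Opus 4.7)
The plan is to combine the bound $\ind\Theta_n(\Prime) \leq n + 2 + s_{n+1}$ (already established in the discussion preceding Lemma \ref{lem:primeBound}) with Lemma \ref{lem:primeBound} and a classical analytic fact, namely $\prod_{p\leq x}(1-1/p)\to 0$ as $x\to\infty$ (Mertens). Since Lemma \ref{lem:primeBound} only controls $s_n$ when $n$ is a multiple of $P_k := \prod_{i\leq k}p_i$, the first task is a short combinatorial extension to arbitrary $n$, and then everything else is taking a $\limsup$.

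First I would extend Lemma \ref{lem:primeBound} to all $n$. Given $n\in\N$ and $k$, let $m = P_k\lfloor n/P_k\rfloor$, so $0\leq n-m < P_k$. The key observation is that if $A\subseteq\{0,\dots,n-1\}$ represents a plausible sequence of length $n$, then $A\cap\{0,\dots,m-1\}$ represents a plausible sequence of length $m$: any residue $r\in\{0,\dots,p-1\}$ missing from $A\bmod p$ is still missing from $(A\cap\{0,\dots,m-1\})\bmod p$, and the set of primes $\leq m$ is a subset of the primes $\leq n$. Hence the map $A\mapsto (A\cap\{0,\dots,m-1\},\,A\cap\{m,\dots,n-1\})$ is injective, which gives $s_n\leq s_m\cdot 2^{n-m} \leq 2^{P_k}\cdot s_m$. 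Applying Lemma \ref{lem:primeBound} to $m$ and writing $\varepsilon_k := \prod_{i\leq k}(1-1/p_i)$ yields
\[
s_n \;\leq\; P_k\cdot 2^{P_k}\cdot 2^{m\varepsilon_k} \;\leq\; C_k\cdot 2^{n\varepsilon_k},
\]
where $C_k := P_k\cdot 2^{P_k}$ depends only on $k$.

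Next I would plug this into the class-count estimate. From $\ind\Theta_n(\Prime)\leq n+2+s_{n+1}\leq n+2+C_k\cdot 2^{(n+1)\varepsilon_k}$, the dominant term is $2^{(n+1)\varepsilon_k}$ (since $\varepsilon_k>0$), so
\[
\eta(\Prime) \;=\; \limsup_{n\to\infty}\frac{\log_2\ind\Theta_n(\Prime)}{n} \;\leq\; \varepsilon_k \;=\; \prod_{i\leq k}\left(1-\frac{1}{p_i}\right).
\]
This holds for every $k$. By Mertens' theorem (equivalently, the divergence of $\sum_p 1/p$), $\varepsilon_k\to 0$ as $k\to\infty$, and letting $k\to\infty$ gives $\eta(\Prime)\leq 0$. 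Since the entropy is nonnegative, $\eta(\Prime)=0$.

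The only subtlety is the step where I extend Lemma \ref{lem:primeBound} to all $n$; the rest is just taking limits. The analytic input — that $\prod_{p\leq x}(1-1/p)\to 0$ — is classical and external to the paper, so it is invoked as a black box rather than proved.
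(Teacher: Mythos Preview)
Your proof is correct and follows the same route as the paper: combine the class-count bound $\ind\Theta_n(\Prime)\leq n+2+s_{n+1}$ with Lemma~\ref{lem:primeBound} to get $\eta(\Prime)\leq\prod_{i\leq k}\frac{p_i-1}{p_i}$ for every $k$, and then invoke Mertens to send this to zero. The one place where you are actually more careful than the paper is the extension of Lemma~\ref{lem:primeBound} from multiples of $P_k$ to arbitrary $n$; the paper suppresses this step entirely, whereas you supply a clean restriction argument $s_n\leq 2^{P_k}s_m$ that makes the bound uniform in $n$ at the cost of a $k$-dependent constant, which is harmless in the $\limsup$.
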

\begin{proof}
From the previous observation and Lemma \ref{lem:primeBound} we can deduce that 
\begin{align*}
\eta(\Prime)\leq \prod_{i\leq k}\frac{p_i-1}{p_i}
\end{align*}
for all $k$.
An old result from Mertens \cite{mertens} states \[\prod_{i\leq k}\frac{p_i-1}{p_i}\in \mathcal O \left(\frac{1}{\log_2(k)}\right).\text{ Therefore }\lim_{k\to\infty} \prod_{i\leq k}\frac{p_i-1}{p_i}=0\] and we conclude that $\eta(\Prime)=0$. 
\end{proof}

Even though not relevant for the entropy of $\Prime$, the following observations are simply too beautiful not to be mentioned.
We conjecture the following.
\begin{conjecture}\label{con:primeSequences}
Every plausible sequence of primes occurs at least once.
\end{conjecture}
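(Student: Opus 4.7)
The conjecture is essentially a strong form of the Hardy--Littlewood prime $k$-tuples conjecture, so I would not expect a complete unconditional proof; instead I would describe how the conjecture follows from that conjecture together with a simple density argument, and indicate where unconditional progress might come from.

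The first observation is that plausibility coincides with admissibility in the Dickson / Hardy--Littlewood sense. For a prime $p \leq n$ the plausibility condition $A \bmod p \neq \{0,\dots,p-1\}$ is the admissibility condition by definition. For a prime $p > n$, since $A \subseteq \{0,\dots,n-1\}$ we have $|A \bmod p| \leq |A| \leq n < p$, so $A \bmod p \neq \{0,\dots,p-1\}$ holds automatically. Hence plausibility of $A$ is precisely admissibility of $A$.

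The next step is to invoke the prime $k$-tuples conjecture, which asserts that for admissible $A$ there are infinitely many $k$ with $k+a$ prime for every $a \in A$; Hardy--Littlewood even predicts an asymptotic density of order $N/(\log N)^{|A|}$ for such $k \leq N$. To match the entire pattern we additionally require $k+b$ composite for each $b \in \{0,\dots,n-1\}\setminus A$. Heuristically, for a single $b$ the probability that $k+b$ is prime is of order $1/\log k$, so by a crude union bound the proportion of $k \leq N$ on which some $k+b$ with $b \notin A$ is prime is at most $(n-|A|)/\log N$, which tends to $0$. This is of strictly smaller order than the conjectural density of $k$-tuples, so infinitely many $k$ realise the exact pattern, and in particular one does.

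The main obstacle is therefore that the prime $k$-tuples conjecture is itself a central open problem in analytic number theory; even the twin prime case $A=\{0,2\}$ is unresolved. Maynard--Tao sieve methods establish bounded gaps between primes and existence of primes inside certain admissible configurations, but are not presently strong enough to realise arbitrary admissible patterns. As a modest unconditional step one can verify the conjecture for small $n$ by direct search, and the case $|A|=1$ follows from the existence of arbitrarily large prime gaps: picking a prime $p$ whose successor gap exceeds $n$ and setting $k=p-a$ produces the required witness. A full resolution, however, would require progress well beyond what is currently available.
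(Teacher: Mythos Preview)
The statement you were given is a \emph{conjecture}, and the paper does not prove it. What the paper offers in lieu of a proof is: computational verification for lengths up to $29$; a short lemma showing that ``every plausible sequence occurs at least once'' is equivalent to ``every plausible sequence occurs infinitely often''; and the observation that the conjecture generalises both the twin prime conjecture and the Green--Tao theorem. Your response is therefore appropriate in spirit: you correctly recognise that no unconditional proof should be expected, and your identification of plausibility with Hardy--Littlewood admissibility (automatic for $p>n$ since $|A|\le n<p$) is sharper than anything the paper spells out. The paper's ``once $\Leftrightarrow$ infinitely often'' lemma is the one piece of content you do not reproduce.

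Two technical points in your conditional sketch deserve correction. First, the union-bound comparison is in the wrong direction for $|A|\ge 2$: the conjectural density of prime $|A|$-tuples up to $N$ is $\asymp(\log N)^{-|A|}$, which is \emph{smaller} than your bound $(n-|A|)/\log N$ on the proportion of $k\le N$ with some extraneous $k+b$ prime, so the crude union bound over all $k$ does not isolate any exact matches. The repair is to apply the quantitative Hardy--Littlewood asymptotic to each $B$ with $A\subseteq B\subseteq\{0,\dots,n-1\}$ and use inclusion--exclusion; the $B=A$ term dominates and gives $\sim C_A N/(\log N)^{|A|}$ exact matches. Second, for $|A|=1$, say $A=\{a\}$ with $0<a<n-1$, setting $k=p-a$ requires the prime $p$ to be isolated on \emph{both} sides by distance at least $n$, not just to have a large successor gap. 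This is still unconditionally available (for instance, Brun's sieve gives $O(nN/(\log N)^2)$ primes in $[N,2N]$ within distance $n$ of another prime, which is $o(\pi(2N)-\pi(N))$), but it is not quite the one-line consequence of large gaps that you state.
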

We have computationally verified this conjecture for sequences of length up to 29.
What makes this conjecture interesting is the following lemma.
\begin{lemma}
The following statements are equivalent:
\begin{enumerate}
\item Every plausible sequence of primes occurs at least once.
\item Every plausible sequence of primes occurs infinitely often.
\end{enumerate}
\end{lemma}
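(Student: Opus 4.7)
The direction $(2)\Rightarrow(1)$ is immediate, so the substance lies in $(1)\Rightarrow(2)$. My plan is to fix a plausible sequence $A$ of length $n$ and prove the sharper statement that for every $N\in\N$, $A$ occurs at some starting position $\geq N$; this forces the set of occurrence positions to be unbounded, hence infinite.

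The construction I would use is a right-translate. For $N\geq 1$, set
\[
A' := \{a+N\mid a\in A\}\subseteq\{0,1,\dots,N+n-1\},
\]
placing a shifted copy of $A$ at the right end of a window of length $N+n$. The key claim is that $A'$ is itself plausible of length $N+n$, and the verification splits by the size of the prime $p\leq N+n$. For $p\leq n$, plausibility of $A$ supplies a residue $\ell\in\{0,\dots,p-1\}$ not hit by $A\bmod p$, and then $(\ell+N)\bmod p$ is not hit by $A'\bmod p$. For $n<p\leq N+n$ we simply have $|A'|=|A|\leq n<p$, so $A'\bmod p$ cannot exhaust all $p$ residues for purely cardinality reasons. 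Hence $A'$ is plausible.

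Once plausibility of $A'$ is established, assumption (1) supplies a starting position $k'\geq 0$ at which $A'$ occurs. Unwinding the ``occurs at $k'$'' condition and restricting to the rightmost $n$ indices $\{N,N+1,\dots,N+n-1\}$ of the window yields exactly that $A$ occurs at position $k'+N\geq N$. Letting $N$ range over $\N$ produces occurrences of $A$ at arbitrarily large positions, completing $(1)\Rightarrow(2)$. I do not expect any real obstacle here: the only creative step is spotting the translate $A'$, after which the plausibility check is essentially a two-line computation. The intuition is that a translation by $N$ perturbs the residue pattern modulo small primes by a fixed offset and so preserves whichever residue class was missed, while primes above $n$ are handled by cardinality alone.
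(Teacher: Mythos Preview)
Your argument is correct, and it takes a genuinely different route from the paper's proof. The paper argues by contradiction via \emph{prefix} extensions: assuming $A$ occurs only $k$ times, it finds a length $N$ for which there exist more than $k$ plausible sequences $A_1,\dots,A_l$ of length $N$ with $A_i\cap\{0,\dots,n\}=A$; since each occurrence of an $A_i$ forces an occurrence of $A$ at the same starting point, and distinct $A_i$ cannot occur at the same point, some $A_i$ fails to occur, contradicting (1). Your argument instead places $A$ as a \emph{suffix} by translating it by $N$, verifies plausibility of the single set $A'=A+N$ directly, and reads off an occurrence of $A$ at position $k'+N\geq N$.

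Your route is arguably cleaner: it is constructive, needs only one extension rather than a family of them, and the plausibility check is a two-line computation exactly as you say. The paper's counting argument, by contrast, tacitly relies on the existence of arbitrarily many plausible prefix-extensions of $A$, a fact that is true but not spelled out there; your approach sidesteps this entirely. On the other hand, the paper's idea generalises more readily if one wanted to control \emph{where} within the longer window the pattern $A$ sits, whereas your translate pins $A$ to the rightmost block and forces the first $N$ positions of $A'$ to be composite.
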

\begin{proof}
Assume the sequence represented by $A\subseteq\{0,\dots,n\}$ occurs only $k$ times. Then let $N\in\N$ such that there are more than $k$ plausible sequences represented by $A_1,\dots,A_l$ of length $N$ with the initial sequence as prefix, i.e., $A_i\cap \{0,\dots,n\}=A$. Hence at least one of these $A_i$ does not represent an occurring sequence, a contradiction.
\end{proof}
Note that the twin prime conjecture can be formulated as: the sequence $\{0,2\}$ occurs infinitely often. 
As a consequence, Conjecture \ref{con:primeSequences} is a generalization of the twin prime conjecture and the Green-Tao Theorem.

\end{example}

\section{Conclusion}

In this article we introduced the notion of a topological automaton from Steinberg~\cite{steinberg}. We defined the topological entropy of a topological automata and of a formal language.

We further investigated the notion of topological entropy of formal languages and its suitability as a measure of the complexity of formal languages. 
We were able to calculate the entropy of the Dyck languages, an previously open problem, and provided many other new examples.

We modified an example from \cite{schneiderborchmann} to show that the entropy function
is surjective for every $\Sigma$ with $|\Sigma|\geq 2$. Whether this is also the case for unary alphabets remains an open problem (Conjecture \ref{con:surjUnary}).

Our second main result concerns a conjecture from Schneider and Borchmann \cite{schneiderborchmann}.
They suspected that all languages accepted by a one-way finite automaton equipped with a fixed number of counters and an acceptance condition that does only require to check local conditions have zero entropy. We showed that this conjecture holds if we assume the automaton to be deterministic and $\varepsilon$-free and we were even able to generalize this result to deterministic $\varepsilon$-free push-down automata. We showed that the entropy of a language accepted by such an automaton is bounded in terms of the sizes of the stack alphabets of the automaton. This result proves that all deterministic $\varepsilon$-free context-free languages have finite entropy. An open problem from this section is whether all languages accepted by nondeterministic 1-counter automata with $\varepsilon$-transitions have finite entropy (Conjecture \ref{con:oneCounterFiniteEntropy}).

On the other hand, we also saw that the definition of entropy is not very robust, since we can use padding to decrease the entropy of any language to zero. Consequently, there are also undecidable languages with zero entropy. It is also counterintuitive that the entropy of a language is not the same as the entropy of the reversed language. 
Hence we suggest to define something like the entropy of the \define{core} of a language with the following properties:
\begin{itemize}
\item the entropy of the core of a language is at least as large as the entropy of the language,
\item padding a language does not influence the entropy of the core of this language,
\item reversing a language does not influence the entropy of the core of the language,
\item the entropy of the core of the languages we used to show surjectivity should be infinite, and
\item encoding the language should not change the entropy of the core of the language.
\end{itemize}

We propose to define the entropy of the core of a language $L$ in the following way:
\[\eta_{\core}(L)=\sup \{\eta(L')\mid L'\in\core(L)\},\]
where $\core(L)$ should contain at least $L$, $L^R=\{w^R\mid w\in L\}$, and every $L'$ such that there is an encoding $\enc$ with $\enc(L')=L$. But a suitable definition of $\core(L)$ remains to be found.

\end{document}